\documentclass{article}

\usepackage{arxiv}

\usepackage[numbers]{natbib}

\usepackage[utf8]{inputenc}
\usepackage[T1]{fontenc}
\usepackage{microtype}
\usepackage{siunitx}
\newcommand{\code}[1]{\texttt{\small #1}}

\usepackage{array}                     
\usepackage{booktabs}                  
\usepackage{tabularx}
\usepackage{longtable}
\usepackage{multirow}
\usepackage{makecell}
\usepackage{rotating}
\usepackage{colortbl}                  

\newcolumntype{L}[1]{>{\raggedright\arraybackslash}p{#1}}
\newcolumntype{C}[1]{>{\centering\arraybackslash}p{#1}}

\usepackage{enumitem}
\usepackage{siunitx}
\usepackage{comment}
\usepackage{setspace}
\usepackage{url}                       

\usepackage{amsmath}                   
\usepackage{amssymb}                   
\usepackage{amsfonts}                  
\usepackage{amsthm}                    
\usepackage{nicefrac}                  

\newtheorem{theorem}{Theorem}
\newtheorem{lemma}{Lemma}
\theoremstyle{definition}
\newtheorem{definition}{Definition}

\usepackage{graphicx}
\usepackage{tikz}
\usetikzlibrary{
    positioning,
    shapes.geometric,
    shapes.misc,
    arrows.meta,
    calc,
    mindmap,
    shadows,
    fit,
    backgrounds,
    patterns
}
\usepackage{pgfplots}
\pgfplotsset{compat=1.18}

\usepackage{algorithm}
\usepackage{algpseudocode}

\definecolor{unsafrd} {HTML}{B71C1C}
\definecolor{figteal}  {RGB}{180,225,210}
\definecolor{codebg}    {HTML}{F8F8F8}
\definecolor{codeframe} {HTML}{DDDDDD}
\definecolor{codegreen} {HTML}{2E7D32}
\definecolor{codeblue}  {HTML}{1565C0}
\definecolor{codegray}  {HTML}{757575}
\definecolor{codepurple}{HTML}{7B1FA2}

\usepackage{listings}
\usepackage{hyperref}

\usepackage[acronym, nogroupskip, nonumberlist, nopostdot]{glossaries}
\glsdisablehyper   

\newcommand{\arduinotool}{ESBMC-Arduino}

\providecommand{\Description}[1]{}

\title{ESBMC-Arduino: Closing the Deployment Gap for Formal Verification of Open-Hardware PLCs}

\renewcommand{\shorttitle}{ESBMC-Arduino: Closing the Deployment Gap for Open-Hardware PLCs}
\renewcommand{\undertitle}{Preprint}

\author{%
  \href{https://orcid.org/0000-0001-6390-9340}{Pierre Dantas}%
    \thanks{Corresponding author.} \\
  Department of Computer Science \\
  The University of Manchester \\
  Manchester, UK \\
  \texttt{pierre.dantas@manchester.ac.uk} \\
  \And
  \href{https://orcid.org/0000-0002-6235-4272}{Lucas Cordeiro} \\
  Department of Computer Science \\
  The University of Manchester \\
  Manchester, UK \\
  \texttt{lucas.cordeiro@manchester.ac.uk} \\
  \And
  \href{https://orcid.org/0000-0003-3095-0042}{Waldir Junior} \\
  Electrical Engineering \\
  Federal University of Amazonas (UFAM) \\
  Manaus, AM, Brazil \\
  \texttt{waldirjr@ufam.edu.br} \\
}

\hypersetup{
  pdftitle  = {ESBMC-Arduino: Closing the Deployment Gap for Formal Verification of Open-Hardware PLCs},
  pdfauthor = {Pierre Dantas, Lucas Cordeiro, Waldir Junior},
  pdfkeywords = {deployment gap, hardware-faithful verification, HAL descriptor, machine word width, ADC input model, PLC, IEC 61131-3, ESBMC, BMC, k-induction, formal verification, PLCopen XML},
}

\begin{document}

\newacronym{aadl}{AADL}{Architecture Analysis and Design Language}
\newacronym{abi}{ABI}{Application Binary Interface} 
\newacronym{ansi-c}{ANSI-C}{American National Standards Institute C}
\newacronym{api}{API}{Application Programming Interface}
\newacronym{adc}{ADC}{Analog-to-Digital Converter}
\newacronym{ast}{AST}{Abstract Syntax Tree}
\newacronym{bdd}{BDD}{Binary Decision Diagrams}
\newacronym{bmc}{BMC}{Bounded Model Checking}
\newacronym{cbmc}{CBMC}{Bounded Model Checking for ANSI-C Programs}
\newacronym{cegar}{CEGAR}{Counterexample-Guided Abstraction Refinement}
\newacronym{cern}{CERN}{Conseil Européen pour la Recherche Nucléaire}
\newacronym{cfg}{CFG}{Control Flow Graph}
\newacronym{chc}{CHC}{Constrained Horn Clause}
\newacronym{cli}{CLI}{Command-Line Interface}
\newacronym{cpu}{CPU}{Central Processing Unit}
\newacronym{ctl}{CTL}{Computation Tree Logic}
\newacronym{cuda}{CUDA}{Compute Unified Device Architecture}
\newacronym{cve}{CVE}{Common Vulnerability and Exposure}
\newacronym{dfs}{DFS}{Depth-First Search}
\newacronym{dsl}{DSL}{Domain-Specific Language}
\newacronym{epsrc}{EPSRC}{Engineering and Physical Sciences Research Council}
\newacronym{esbmc}{ESBMC}{Efficient SMT-based Context-Bounded Model Checker}
\newacronym{fbd}{FBD}{Functional Block Diagram}
\newacronym{fpga}{FPGA}{Field-Programmable Gate Array}
\newacronym{gpio}{GPIO}{General-Purpose Input/Output}
\newacronym{hal}{HAL}{Hardware Abstraction Layer}
\newacronym{hil}{HIL}{Hardware-in-the-Loop}
\newacronym{ic3}{IC3}{Incremental Construction of Inductive Clauses for Indubitable Correctness}
\newacronym{ide}{IDE}{Integrated Development Environment}
\newacronym{iec}{IEC}{International Electrotechnical Commission}
\newacronym{ieee}{IEEE}{Institute of Electrical and Electronics Engineers}
\newacronym{ics}{ICS}{Industrial Control Systems}
\newacronym{il}{IL}{Instruction List}
\newacronym{iot}{IoT}{Internet of Things}
\newacronym{ir}{IR}{Intermediate Representation}
\newacronym{iso}{ISO}{International Organization for Standardization}
\newacronym{ld}{LD}{Ladder Diagram}
\newacronym{llb}{LLB}{Ladder Logic Bombs}
\newacronym{llm}{LLM}{Large Language Model}
\newacronym{ltl}{LTL}{Linear Temporal Logic}
\newacronym{mcu}{MCU}{Microcontroller Unit}
\newacronym{nasa}{NASA}{National Aeronautics and Space Administration}
\newacronym{pwm}{PWM}{Pulse-Width Modulation}
\newacronym{pdr}{PDR}{Property Directed Reachability}
\newacronym{pid}{PID}{Proportional-Integral-Derivative}
\newacronym{plc}{PLC}{Programmable Logic Controller}
\newacronym{pou}{POU}{Program Organization Unit}
\newacronym{por}{POR}{Partial Order Reduction}
\newacronym{rtos}{RTOS}{Real-Time Operating System}
\newacronym{sat}{SAT}{Boolean Satisfiability}
\newacronym{scl}{SCL}{Structured Control Language}
\newacronym{sfc}{SFC}{Sequential Function Chart}
\newacronym{slr}{SLR}{Systematic Literature Review}
\newacronym{smt}{SMT}{Satisfiability Modulo Theories}
\newacronym{smtlib2}{SMT-LIB}{Satisfiability Modulo Theories Library}
\newacronym{ssa}{SSA}{Static Single Assignment}
\newacronym{st}{ST}{Structured Text}
\newacronym{stl}{STL}{Statement List}
\newacronym{svcomp}{SV-COMP}{Competition on Software Verification}
\newacronym{tacas}{TACAS}{Tools and Algorithms for the Construction and Analysis of Systems}
\newacronym{ufam}{UFAM}{Federal University of Amazonas}
\newacronym{ukri}{UKRI}{UK Research and Innovation}

\maketitle

\begin{abstract}
OpenPLC, Arduino OPTA, CONTROLLINO, and Industrial Shields M-Duino -- bring \acrshort{iec}~61131-3 programming to low-cost, microcontroller-class devices increasingly used in real automation and in \gls{ics} security research. Existing open-source verifiers for \acrshort{iec}~61131-3, including the ESBMC-PLC line, prove safety over an \emph{abstract} scan-cycle model with idealized unbounded integers. The artifact that runs on the board, however, executes on a resource-constrained \gls{mcu} with a 16-bit machine word (8-bit AVR Arduinos), and sensors are read through a finite-resolution \gls{adc}. We show that this \emph{deployment gap} makes naive width-aware verification \emph{unsound in practice}: on 123 real third-party programs, checking 16-bit arithmetic overflow without a hardware input model raises a \SI{44}{\percent} false-alarm rate (54/123) and finds zero genuine defects, because it explores sensor values no \gls{adc} can produce. Because the gap sits exactly where computation meets the physical process -- a bounded sensor reading scaled by finite-width integer arithmetic into an actuation command -- an overflow can silently suppress a physical safety action such as a high-level alarm. At the same time, an unbounded input model fabricates alarms that no environment can trigger. We present a method for \textbf{hardware-faithful verification of \acrshort{iec}~61131-3 on open hardware}: a declarative \gls{hal} descriptor (word width, \gls{adc}/\gls{pwm} resolution, I/O binding) and a sound lowering that interprets arithmetic at the target word width and constrains each input to its hardware-realizable range, transferring to the \gls{plc} domain the target-model discipline established for safety-critical embedded C (Frama-C, Astr\'ee). We instantiate it for Arduino as \textbf{\arduinotool{}}, deriving \gls{hal} parameters from the official Arduino cores and realizing the input-range model as a lightweight hook in the \gls{esbmc} \gls{ld} frontend. On the 123-program corpus, the \gls{hal} annotator \textbf{eliminates all 54 false alarms (54$\to$0)} while preserving every robustness proof, and a controlled corpus demonstrates the genuine -- but rare -- width-dependent defects the method detects with physically realizable witnesses.
\end{abstract}

\keywords{deployment gap \and hardware-faithful verification \and \acrshort{hal} descriptor \and machine word width \and \acrshort{adc} input model \and \acrshort{plc} \and IEC~61131-3 \and open-hardware \acrshort{plc} \and \acrshort{esbmc} \and \acrshort{bmc} \and \textit{k}-induction \and formal verification \and \acrshort{ld} \and PLCopen XML}

\glsresetall

\section{Introduction}
\label{sec:intro}

\glspl{plc} govern safety-critical processes, but a new class of \emph{open-hardware} \glspl{plc} -- OpenPLC~\cite{Alves2018openplc, Alves2014openplc}, Arduino OPTA~\cite{ArduinoOpta, ArduinoPLCIDE}, CONTROLLINO~\cite{Controllino}, and Industrial Shields M-Duino~\cite{IndustrialShields} -- now brings \gls{iec}~61131-3 programming to low-cost, microcontroller-class devices. These platforms are widely used in education, small-scale automation, and -- importantly for assurance -- as the runtime substrate for academic \gls{ics} security testbeds; OpenPLC, in particular, was designed explicitly for cybersecurity research~\cite{Alves2018openplc}.

\subsection{The Deployment Gap}
Open-source verifiers for \gls{iec}~61131-3 -- the ESBMC-PLC in particular~\cite{DantasCordeiro2026artefact, DantasCordeiro2026graphical, ESBMCpr5400}, and PLCverif~\cite{LopezMiguel2022,LopezMiguel2025} -- prove safety properties over an \emph{abstract scan-cycle model} of the \gls{ld} or \gls{st} logic, with idealized, effectively unbounded integers and unconstrained inputs. The artifact that runs on an open-hardware \gls{plc} is different: it executes on a resource-constrained \gls{mcu} where a machine word is only 16~bits on the 8-bit AVR boards (Arduino Uno/Nano/Mega), and physical inputs arrive through an \gls{adc} of finite resolution (a 10-bit \gls{adc} yields integer readings in $[0,\num{1023}]$). Arithmetic that is exact in the abstract model can therefore \emph{overflow} on the device, and a property proved on the logic model need not hold on the deployed binary. We call this discrepancy the \emph{deployment gap}.

The gap cuts both ways, and both directions matter for assurance. A program can be unsafe on the device yet pass abstract verification (a missed defect); and -- less obviously but, as we show, far more common -- \emph{naively} accounting for the word width without modelling the hardware \emph{inputs} makes verification \emph{unsound in practice}: the verifier explores sensor values no \gls{adc} can produce (e.g.\ IN = \num{-32764}), reports overflows that can never occur, and buries any real finding under false alarms. On 123 real third-party programs, we measured a \SI{44}{\percent} false-alarm rate (54/123) with zero genuine defects found this way (\S\ref{sec:eval}).

\subsection{Contributions}
This paper closes the gap with a method and its open-hardware instantiation. Our thesis is \emph{``what you verify must be what runs on the device''}, which requires modeling \emph{both} the target word width and the hardware-realizable input ranges.

\begin{enumerate}[leftmargin=2em]
    \item \textbf{Hardware-faithful verification of \gls{iec}~61131-3 (\S\ref{sec:method}).} A method that, given a program $P$ and a declarative \gls{hal} descriptor $H$ (word width, \gls{adc}/\gls{pwm} resolution, I/O binding), constructs a sound \emph{deployment model} $M_{dev}(P,H)$: arithmetic is interpreted at the target word width and every input is constrained to its hardware-realizable range. We give the construction, a soundness argument, and a gap-diagnosis procedure that flags hardware-induced defects with \emph{physically realizable} witnesses. Methodologically, this transfers to the \gls{plc} domain the target-model discipline long established for safety-critical embedded~C (Frama-C's machine model~\cite{Kirchner2015framac}, Astr\'ee~\cite{Cousot2005astree}), with the novelty that the \gls{hal} model is derived \emph{automatically} from the program's PLCopen I/O addresses.
    
    \item \textbf{\arduinotool{}: an Arduino instantiation (\S\ref{sec:arduino}).} A \gls{hal} library whose parameters are sourced from the official Arduino cores~\cite{ArduinoCoreAVR}, and a realization of the input-range model as a lightweight ($\approx$40-line) hook in the \gls{esbmc}~\gls{ld} frontend -- the \gls{smt} backend and \textit{k}-induction engine are reused unchanged.
    
    \item \textbf{Evaluation on real and controlled corpora (\S\ref{sec:eval}).} On 123 real third-party \gls{iec}~61131-3 programs, naive 16-bit overflow checking yields a \SI{44}{\percent} false-alarm rate and zero true findings; the \gls{hal} annotator \textbf{eliminates all 54 false alarms (54$\to$0)} while preserving every robustness proof. A controlled corpus demonstrates the genuine width-dependent defects the method detects (with realizable witnesses) and precisely characterizes which arithmetic patterns are vulnerable, thereby establishing, honestly, that the gap is real but \emph{rare} in the public corpora we could ingest.
\end{enumerate}

\subsection{Scope}
We target the integer/Boolean fragment of \gls{iec}~61131-3, deployed via open-hardware toolchains, and evaluate at the model level (no hardware-in-the-loop). Non-linear floating-point function blocks (e.g.\ \gls{pid}, transcendental sensor linearization) are outside the deployment-gap claim and are soundly over-approximated (\S\ref{sec:method}); we discuss this and other limitations in \S\ref{sec:threats}.

\subsubsection*{Why it Matters} 
Open-hardware \glspl{plc} are not toys: they run real water, lighting, and process-control logic, and they are the default substrate for the academic \gls{ics} security testbeds on which much of the field's empirical work depends. A safety or security argument built on abstract verification of such a system is only as sound as the assumption that the abstract model matches the device -- an assumption this paper shows to be false in a precise, reproducible way, and then repairs. The repair is cheap (a four-field descriptor and a 40-line hook) and reuses a mature verifier, so it is realistic to adopt rather than a clean-slate proposal.

\subsubsection*{Roadmap}
\S\ref{sec:background} reviews the scan cycle, the OpenPLC toolchain, the machine models of Arduino targets, and the verification backend. \S\ref{sec:related} positions the work against \gls{plc} verification and target-faithful embedded-C analysis. \S\ref{sec:gap} formalizes the deployment gap; \S\ref{sec:method} presents the method (descriptor, lowering, soundness; Fig.~\ref{fig:arch}); \S\ref{sec:arduino} gives the Arduino instantiation and implementation; \S\ref{sec:walkthrough} walks through two end-to-end scenarios; and \S\ref{sec:eval} reports the evaluation. \S\ref{sec:discussion}--\ref{sec:future} discuss implications, threats, and future work.

\section{Background}
\label{sec:background}

\subsection{The PLC Scan Cycle}
A \gls{plc} executes a deterministic \emph{scan cycle}: it reads all inputs into a process image, executes the user program once from first rung to last, writes the output image to the actuators, and repeats~\cite{ESBMCpr5400}. The ESBMC-PLC line encodes this as a \code{while(true)} loop in which inputs are re-sampled each iteration (an open-world sensor model) nondeterministically, output coils are persistent state, and safety properties are injected as assertions; \textit{k}-induction then proves invariants for all future scans. We reuse this encoding and add a faithful model of the hardware boundary that the scan cycle crosses on a real device.

\subsection{Open-Hardware PLCs and the OpenPLC Toolchain}
Open-hardware \glspl{plc} run \gls{iec}~61131-3 programs on commodity microcontroller boards. \textbf{OpenPLC}~\cite{Alves2018openplc, Alves2014openplc} is the dominant open-source example: its editor exports PLCopen XML, and its runtime compiles the program (via MATIEC~\cite{deSousa2014}) to C, which is then cross-compiled for Arduino-, ESP32-, or Raspberry Pi-class targets. \textbf{Arduino OPTA}~\cite{ArduinoOpta} is Arduino's industrial micro-\gls{plc}, programmed in \gls{iec}~61131-3 via the Arduino PLC IDE~\cite{ArduinoPLCIDE}; \textbf{CONTROLLINO}~\cite{Controllino} and \textbf{Industrial Shields M-Duino}~\cite{IndustrialShields} are Arduino-compatible industrial \glspl{plc} commonly driven by OpenPLC. All reduce a vendor-neutral program to C plus a hardware binding (the object we verify).

\subsection{The \acrfull{hal}: Word Width, Pin Binding, ADC, and PWM}
The \gls{hal} connects logical \gls{iec}~61131-3 variables to physical peripherals. Each Boolean \code{\%IX}/\code{\%QX} address binds to a \gls{gpio} pin; analog inputs are read through an \gls{adc} of finite resolution (typically 10--14 bits) and scaled; analog outputs are produced by \gls{pwm}. Two facts of this layer drive the deployment gap: (i)~the \gls{mcu}'s native word width -- 16-bit \code{int} on AVR-class boards -- so computations exact in the abstract model can overflow; and (ii)~the bounded \emph{input domains} -- an $n$-bit \gls{adc} reading lies in $[0,2^{n}-1]$, a digital input in $\{0,1\}$ -- which the abstract model omits entirely (Fig.~\ref{fig:hal}).

\begin{figure}[htbp]
    \centering
    \begin{tikzpicture}[
          font=\small, >={Stealth[length=2mm]},
          log/.style={draw, semithick, rounded corners=1pt, fill=gray!8,
                      align=center, inner sep=2.5pt, text width=19mm, minimum height=6.5mm},
          hw/.style ={draw, semithick, rounded corners=1pt, fill=orange!12,
                      align=center, inner sep=2.5pt, text width=19mm, minimum height=6.5mm},
          bind/.style={font=\scriptsize\itshape, fill=white, inner sep=1pt},
          dom/.style={font=\scriptsize, fill=yellow!25, draw=yellow!55!black,
                      rounded corners=1pt, inner sep=1.5pt},
          hdr/.style={font=\footnotesize\bfseries, align=center},
        ]
        \def\dx{36mm}
        \node[log] (ix) {\code{\%IX0.0}\,: BOOL};
        \node[log, below=5mm of ix] (iw) {\code{\%IW0}\,: INT};
        \node[log, below=5mm of iw] (qw) {\code{\%QW0}\,: INT};
        \node[log, below=5mm of qw] (qx) {\code{\%QX0.0}\,: BOOL};
        \node[hw, right=\dx of ix] (gpi) {GPIO\\digital in};
        \node[hw, right=\dx of iw] (adc) {ADC\\$n$-bit};
        \node[hw, right=\dx of qw] (pwm) {PWM\\(duty)};
        \node[hw, right=\dx of qx] (gpo) {GPIO\\digital out};
        \coordinate (mt) at ($(ix)!0.5!(gpi)$);
        \coordinate (mb) at ($(qx)!0.5!(gpo)$);
        \begin{scope}[on background layer]
          \node[draw=blue!55, dashed, fill=blue!6, rounded corners=2pt,
                inner xsep=7mm, inner ysep=3mm, fit=(mt)(mb)] (hal) {};
        \end{scope}
        \node[hdr, blue!55!black, above=1mm of hal.north] {\gls{hal} descriptor};
        \node[hdr, above=1mm of ix] {Abstract \acrshort{iec}~61131-3\\(unbounded \code{int})};
        \node[hdr, above=1mm of gpi] {Arduino-class \acrshort{mcu}\\(16-bit \code{int})};
        \draw[<-] (ix) -- node[bind]{bind} (gpi);
        \draw[<-] (iw) -- node[bind]{scale} (adc);
        \draw[->] (qw) -- node[bind]{PWM} (pwm);
        \draw[->] (qx) -- node[bind]{bind} (gpo);
        \node[dom, below right=1.1mm and 6mm of gpi.south] {$\{0,1\}$};
        \node[dom, below right=0.8mm and 6mm of adc.south] {$[0,\,2^{n}{-}1]$};
    \end{tikzpicture}
    \caption{The \gls{hal} binds logical \gls{iec}~61131-3 variables to physical peripherals: Boolean \code{\%IX}/\code{\%QX} addresses map to \gls{gpio} pins, analog inputs are read through a finite-resolution \gls{adc} and scaled, analog outputs are produced by \gls{pwm}. Two facts about this layer drive the deployment gap: the \gls{mcu}'s native 16-bit word width and the bounded input domains that the abstract model omits.}
    \Description{Diagram comparing abstract IEC 61131-3 logical variables on the left with physical Arduino-class MCU peripherals on the right. The left column, titled 'Abstract IEC 61131-3 (unbounded int)', contains four stacked boxes: percent IX0.0 (BOOL), percent IW0 (INT), percent QW0 (INT), and percent QX0.0 (BOOL). The right column, titled 'Arduino-class MCU (16-bit int)', contains four corresponding stacked boxes: GPIO digital in, ADC n-bit, PWM (duty), and GPIO digital out. A blue dashed rectangle labeled '\gls{hal} descriptor' spans the gap between the two columns behind the arrows. Arrows connect each left variable to its right peripheral: input variables (percent IX0.0 and percent IW0) have arrows pointing from the MCU toward the logic, labeled 'bind' for the GPIO and 'scale' for the ADC; output variables (percent QW0 and percent QX0.0) have arrows pointing from the logic toward the MCU, labeled 'PWM' for the duty cycle and 'bind' for the GPIO. Under the GPIO digital in box, a yellow box labeled '0,1' appears. Under the ADC box, a yellow box states '0, 2^n-1'. At the bottom, a red-highlighted box titled 'Deployment gap' explains two facts: (i) the MCU uses 16-bit int, so abstract computations can overflow; and (ii) inputs are bounded to finite ranges, which the abstract model omits.
    }
    \label{fig:hal}
\end{figure}

\subsection{PLCopen XML}
\gls{ld} programs are exchanged as PLCopen XML in textual and graphical encodings, reconstructed into rungs by the ESBMC-GraphPLC \gls{dfs} resolver~\cite{DantasCordeiro2026graphical}. Open-hardware editors export the graphical encoding; crucially, the variable declarations carry explicit \code{\%IX}/\code{\%IW} addresses, which we exploit as the ground truth for \gls{hal} input binding. The \gls{iec}~61131-3 direct-addressing scheme is itself the bridge between logic and hardware: a location prefix names the memory area (\code{I}~input, \code{Q}~output, \code{M}~internal) and a size prefix names the width (\code{X}~bit, \code{B}~byte, \code{W}~word, \code{D}~double word). Thus \code{\%IX0.0} is a single digital input bit, while \code{\%IW0} is a 16-bit analog input word; on an open-hardware target, the former is wired to a \gls{gpio} pin read by \code{digitalRead} and the latter to an \gls{adc} channel read by \code{analogRead}. The address, therefore, encodes the information our method needs precisely -- whether a variable is an input and, if analog, the width of the channel feeding it -- without any additional annotation by the Engineer.

\subsection{From PLCopen XML to a Binary: the OpenPLC Toolchain}
\label{sub:openplc-pipeline}
Understanding where the deployment gap enters requires following a program from the editor to the device. In the dominant open-source flow, the OpenPLC Editor exports a PLCopen XML project; the OpenPLC runtime invokes the MATIEC compiler~\cite{deSousa2014} to translate the \gls{iec}~61131-3 \gls{pou}s into portable C (one function per \gls{pou}, plus a generated \code{config\_run\_\_} driver that executes one scan); this C is then cross-compiled by the board's standard toolchain (\code{avr-gcc} for AVR boards, \code{arm-none-eabi-gcc} for ARM boards) and linked against a board-specific runtime that maps the located \code{\%I}/\code{\%Q} addresses onto the \code{digitalRead}/\code{digitalWrite}/\code{analogRead}/\code{analogWrite} primitives of the Arduino core. Three observations follow. First, the \emph{semantics that finally execute} are those of C compiled for the target \gls{abi} -- in particular, integer arithmetic is performed at the target's \code{int} width, not at the abstract model's idealized width. Second, the \emph{inputs} are not free: each is produced by a peripheral with a fixed range. Third, none of this is visible at the PLCopen XML level that existing verifiers analyze, which is exactly why the gap is missed today (Fig.~\ref{fig:toolchain}).

\begin{figure*}[htbp]
    \centering
    \resizebox{\textwidth}{!}{%
    \begin{tikzpicture}[font=\small, >={Stealth[length=2.4mm]},
          stg/.style={draw, semithick, rounded corners=2pt, align=center, text width=21mm, minimum height=12mm, inner sep=3pt},
          abs/.style={stg, fill=gray!10},
          dep/.style={stg, fill=orange!12},
          band/.style={rounded corners=3pt, inner sep=3mm},
          bl/.style={font=\footnotesize\bfseries},
          obs/.style={font=\scriptsize, align=left}]
        \node[abs] (ed) {OpenPLC\\Editor};
        \node[abs, right=6mm of ed] (xml) {PLCopen XML\\project};
        \node[dep, right=16mm of xml] (mat) {MATIEC $\to$ C\\one fn / \acrshort{pou}\\+ \code{config\_run\_\_}};
        \node[dep, right=6mm of mat] (tc) {Board toolchain\\\code{avr-gcc} /\\\code{arm-none-\\eabi-gcc}};
        \node[dep, right=6mm of tc] (rt) {Board runtime\\\code{\%I}/\code{\%Q} $\to$\\\code{digital/\\analogRead}};
        \node[dep, right=6mm of rt] (dev) {Device\\(executes one scan)};
        \draw[->] (ed) -- (xml);
        \draw[->] (xml) -- (mat);
        \draw[->] (mat) -- (tc);
        \draw[->] (tc) -- (rt);
        \draw[->] (rt) -- (dev);
        \begin{scope}[on background layer]
          \node[band, fill=blue!7, fit=(ed)(xml)] (bA) {};
          \node[band, fill=orange!7, fit=(mat)(dev)] (bB) {};
        \end{scope}
        \node[bl, blue!55!black, above=1mm of bA.north] {Analyzed by existing verifiers};
        \node[bl, orange!60!black, above=1mm of bB.north] {What actually executes (target \gls{abi} + peripherals)};
        \draw[red!70!black, dashed, very thick]
          ([xshift=8mm, yshift=5mm]xml.north east) -- ([xshift=8mm, yshift=-5mm]xml.south east);
        \node[red!70!black, font=\scriptsize\bfseries, rotate=90, fill=white, inner sep=1pt]
          at ([xshift=8mm]xml.east) {deployment gap};   
    \end{tikzpicture}%
    }
    \caption{Open-source \gls{plc} deployment: OpenPLC Editor → PLCopen~XML → MATIEC C compilation (\cite{deSousa2014}) → cross-compilation + runtime mapping \code{\%I}/\code{\%Q} to Arduino I/O primitives. Verifiers analyze the abstract model (left), but real semantics (integer widths, bounded I/O) appear only post-compilation (right), creating the deployment gap between them.}
    \Description{Two-column diagram comparing the abstract PLC model on the left with the concrete deployed code on the right. The left column is labeled 'Abstract model (verifiers)'. It shows three stacked boxes from top to bottom: 'PLCopen XML', 'MATIEC + POU C code', and 'Model: unbounded ints, no input bounds'. The right column is labeled 'Concrete device (runtime)'. It shows three corresponding stacked boxes from top to bottom: 'OpenPLC Editor', 'GCC toolchain (target \gls{abi})', and 'Device: 16-bit int, ADC 0-\num{1023}, digital 0/1'. An arrow labeled 'compile/link' points from the middle box in the left column to the middle box to the right column. A horizontal bracket labeled 'Deployment gap' spans between the bottom boxes of both columns, highlighting the discrepancy between the abstract model (unbounded ints, no input bounds) and the concrete device (16-bit int, ADC 0-\num{1023}, digital 0/1). The overall flow shows: OpenPLC Editor exports PLCopen XML; MATIEC compiles each POU to portable C, plus a config_run_ driver; and the board toolchain cross-compiles for the target \gls{abi} and links a runtime that maps IEC addresses to Arduino-core primitives (digitalRead, digitalWrite, analogRead, analogWrite).
}
    \label{fig:toolchain}
\end{figure*}

\subsection{Machine Models of Arduino-Class Targets}
\label{sub:machine-models}
The single most consequential difference between the abstract model and the deployed artifact is the \emph{machine word width}, fixed by the target \gls{abi}. On the 8-bit AVR architecture that powers the Arduino Uno, Nano, Mega, and the Arduino-compatible CONTROLLINO~Micro/Maxi and Industrial Shields M-Duino, the C \code{int} is \textbf{16 bits} (range $[-32{,}768,32{,}767]$), following the \code{avr-gcc} \gls{abi}. On the 32-bit ARM Cortex-M cores that power the Arduino Due, Zero, MKR, UNO~R4, and the industrial OPTA, \code{int} is \textbf{32 bits}. Table~\ref{tab:boards} (\S\ref{sec:arduino}) summarizes the relevant parameters. The consequence is stark and routinely surprising to engineers: the multiplication \code{x*100} that is harmless on a Due overflows on a Uno once \code{x} exceeds $327$, because the result leaves the 16-bit range and wraps under two's-complement semantics. Analog peripherals compound this: the AVR \gls{adc} is 10-bit (readings $0\ldots\num{1023}$) and \code{analogWrite} drives an 8-bit \gls{pwm} duty ($0\ldots255$), whereas the UNO~R4 offers a 14-bit \gls{adc}; the \emph{combination} of a narrow word and a wide sensor range is precisely the regime in which scaling arithmetic overflows. We take all of these parameters from the official Arduino cores~\cite{ArduinoCoreAVR} rather than asserting them, so the machine model is auditable and exactly matches what ships on the device.

\subsection{Bounded Model Checking and \texorpdfstring{$k$}{k}-Induction for PLCs}
\label{sub:bmc-primer}
Our backend is \gls{esbmc}~\cite{Cordeiro2012esbmc, gadelha2020}, an \gls{smt}-based \gls{bmc} engine. \gls{bmc} unrolls the scan loop a bounded number of times $k$ and encodes ``does a property violation occur within $k$ scans?'' as an \gls{smt} formula whose satisfying assignment, if any, is a concrete counterexample trace~\cite{gadelha2020}. Because \gls{esbmc} models machine integers as fixed-width \emph{bit-vectors}, arithmetic overflow and two's-complement wrap-around are reproduced \emph{bit-precisely} -- the property of the solver we rely on to detect width-dependent defects, and the reason a faster word-agnostic abstraction would be unsound here. Plain \gls{bmc} is a falsifier: it finds bugs up to depth $k$ but does not prove their absence. 

To obtain unbounded proofs over the non-terminating scan loop, \gls{esbmc} uses \emph{$k$-induction}~\cite{Sheeran2000kinduction, gadelha2020}: a base case checks the first $k$ scans, and an inductive step shows that if the property holds for $k$ consecutive scans it holds for the next, with internal state hacked. For a per-scan, state-independent property such as ``the scaling arithmetic does not overflow,'' the inductive step typically closes at small $k$; for properties whose proof requires reasoning about loop-carried internal state -- as in function blocks with internal counters -- the inductive step may fail to close, yielding an inconclusive \code{unknown} verdict rather than \code{safe}. This distinction explains both why our method proves many programs robust and why a residual set remains \code{unknown} (\S\ref{sec:eval}). Overflow itself is encoded as a verification condition: for a subtraction \code{a-b} of signed type, \gls{esbmc} asserts \code{!overflow("-", a, b)}, whose negation the solver tries to satisfy within the modeled input ranges (Fig.~\ref{fig:kinduction}).

\begin{figure*}[htbp]
    \centering
    \resizebox{\textwidth}{!}{%
    \begin{tikzpicture}[font=\small, >={Stealth[length=2.4mm]},
          box/.style={draw, semithick, rounded corners=2pt, align=center, text width=26mm, inner sep=3pt, minimum height=11mm},
          eng/.style={box, fill=blue!8, text width=40mm},
          dec/.style={box, fill=gray!8},
          vbad/.style={box, fill=red!14, text width=26mm},
          vsafe/.style={box, fill=green!16, text width=26mm},
          vunk/.style={box, fill=yellow!28, text width=26mm},
          el/.style={font=\footnotesize\itshape, fill=white, inner sep=1pt},
          gl/.style={font=\footnotesize\itshape, text=black!80, align=center, text width=32mm}]
        \node[box, fill=gray!4] (loop) {Non-terminating\\scan loop};
        \node[eng, right=4mm of loop] (eng) {ESBMC: \acrshort{smt}-based \acrshort{bmc} + $k$-induction\\[1pt]\footnotesize \code{int} = fixed-width bit-vector (overflow / wrap bit-precise)};
        \node[dec, right=4mm of eng] (base) {Base case:\\violation in the first $k$ scans?};
        \node[dec, right=9mm of base] (ind) {Inductive step:\\$k$ scans $\Rightarrow$ next?\\(internal state havocked)};
        \node[vbad,  below=10mm of base] (bad)  {\code{unsafe}\\counterexample};
        \node[vsafe, right=11mm of ind]  (safe) {\code{safe}\\unbounded\\ proof};
        \node[vunk,  below=7.5mm of ind]  (unk)  {\code{unknown}\\inconclusive};
        \draw[->] (loop) -- (eng);
        \draw[->] (eng) -- (base);
        \draw[->] (base) -- node[el, above]{no} (ind);
        \draw[->] (base) -- node[el, right]{yes} (bad);
        \draw[->] (ind) -- node[el, above]{closes} (safe);
        \draw[->] (ind) -- node[el, right]{fails} (unk);
        \node[gl, above=0.5mm of safe.north] {per-scan, state-independent (e.g.\ scaling: no overflow)};
        \node[gl, below=0.5mm of unk.south]  {loop-carried internal state (function-block counters)};
    \end{tikzpicture}%
    }
    \caption{Verification with \gls{esbmc}: \gls{bmc} unrolls the scan loop \textit{k} times and encodes violations as bit-vector \gls{smt} formulas, capturing overflow and wrap-around exactly. \textit{k}-induction closes state-independent properties at small \textit{k} (\code{safe}) but may leave loop-carried state open (\code{unknown}). Overflow itself is checked as a condition, e.g., \code{!overflow("-", a, b)} for signed subtraction—with the solver seeking inputs within modeled ranges.}    
    \Description{Two-column ESBMC verification diagram. Left: BMC unrolls the scan loop k times, encodes property violations as SMT formulas over fixed-width bit-vectors, producing UNSAFE if a satisfying assignment (counterexample) is found. Right: k-induction with havocked state between windows yields SAFE for state-independent properties or UNKNOWN for loop-carried state. A bottom box shows overflow as a verification condition for signed subtraction.
    }
    \label{fig:kinduction}
\end{figure*}
\section{Related Work}
\label{sec:related}

We situate the work along three axes -- verification of \gls{iec}~61131-3, target-faithful analysis of embedded software, and code-generation correctness -- and summarize the positioning in Table~\ref{tab:related}.

\subsection{Verification of IEC~61131-3}
The most mature open-source effort is \textbf{PLCverif}~\cite{LopezMiguel2022, LopezMiguel2025}, developed at \gls{cern} and in production since~2019; it accepts Siemens \gls{scl} and dispatches to \gls{cbmc}, nuXmv, or Theta, with engineer-facing requirement patterns.

Earlier and complementary academic tools include:
\begin{itemize}
    \item Arcade.PLC~\cite{Weiss2021}, which provides counterexample-guided analysis and value-set abstraction over several \gls{iec}~61131-3 languages;
    
    \item Model-checking frontends that translate \gls{il} or \gls{st} into the input languages of NuSMV/nuXmv~\cite{Cavada2014}.
\end{itemize}

A complementary and fast-growing line establishes precise formal semantics for \gls{iec}~61131-3 and analyzes richer execution settings:
\begin{itemize}
    \item K-ST gives an executable semantics for \gls{st}~\cite{Wang2023};
    
    \item \citet{Lee2024} formalize multitask \gls{st} with preemption;
    
    \item Closest to our cyber-physical concern, \citet{Lee2025} formalize networked \glspl{plc} interacting with physical environments;
    
    \item FRET-based monitor synthesis extends PLCverif with requirement formalization~\cite{Fink2024};
    
    \item Cooperative verification orchestrates multiple backends via CoVeriTeam~\cite{Ukegbu2023a};
    
    \item Ladder logic has been formalized for fault-tolerance analysis~\cite{Ebnenasir2023}.
\end{itemize}

Industrial practice mirrors this -- an \gls{smt}-based ladder-to-model-checker translation is patented for network anomaly detection~\cite{Bruttomesso2024}.

The \textbf{ESBMC-PLC} line targets the \gls{esbmc} backend specifically:
\begin{itemize}
    \item ESBMC-PLC~\cite{DantasCordeiro2026artefact} introduced a native textual PLCopen XML \gls{ld} frontend with \textit{k}-induction;
    
    \item ESBMC-GraphPLC~\cite{DantasCordeiro2026graphical} added the graphical encoding via a \gls{dfs} rung resolver;
    
    \item ESBMC-PLC+~\cite{ESBMCpr5400} added \gls{st}/\gls{scl} ingestion through MATIEC and scan-cycle-accurate semantics for timer, counter, and edge function blocks.
\end{itemize}

What unites \emph{all} of these tools -- commercial and academic alike -- is that they verify an \emph{abstract} model of the control logic: integers are mathematical (or fixed at a host width chosen for convenience, typically 32~bits), and inputs are unconstrained. None models the word width of the deployment target or the bounded domains of the physical inputs. The deployment gap is therefore orthogonal to and composable with each of these efforts; our method could, in principle, be layered onto PLCverif or Arcade.PLC as readily as onto \gls{esbmc}.

\subsection{Target-Faithful Analysis of Embedded Software}
The discipline we transfer to the \gls{plc} domain is well established for handwritten embedded~C.

\begin{itemize}
    \item Frama-C~\cite{Kirchner2015framac} carries an explicit machine model selected with \code{-machdep} (e.g.\ \code{avr} fixes a 16-bit \code{int}), and its value analysis obtains input ranges from ACSL contracts written by the Engineer.
    
    \item Astr\'ee~\cite{Cousot2005astree}, the abstract interpreter used in avionics certification, reliably reports integer and floating-point overflow in the target's arithmetic and derives input bounds from volatile-range and \code{known\_fact} annotations.
    
    \item Industrial static analyzers in the same lineage detect overflow under a configured target model.
\end{itemize}

All of these ``deal with the gap'' in exactly the manner we advocate -- \emph{target machine model plus bounded inputs} -- which is precisely why our approach is defensible rather than speculative.

Two differences distinguish our work.

\begin{enumerate}
    \item The object of analysis is \gls{iec}~61131-3, not C: the Engineer never sees or writes the generated C, so any hardware model must be expressed at, or recovered for, the \gls{plc} level.
    
    \item And centrally, those tools require the input ranges to be \emph{supplied manually} (as contracts or annotations), whereas we \emph{derive} them automatically from the \gls{iec}~61131-3 direct-addressing scheme: a \code{\%IW} location is, by the standard, an analog input word, and the board descriptor fixes its \gls{adc} width. The Engineer writes nothing.
\end{enumerate}
\subsection{Code-Generation Correctness}
A separate body of work establishes that a code generator \emph{preserves semantics} from model to executable:

\begin{itemize}
    \item CompCert~\cite{Leroy2009compcert} for general-purpose C-to-assembly.
    
    \item Qualified generators such as SCADE~KCG in the model-based-design world.
\end{itemize}

These guarantees are real but orthogonal: even a perfectly verified translation does not prevent a program's \emph{own} arithmetic from overflowing on a small target, because the overflow is a property of the source semantics under the target word width, not of the translation. Our method is complementary -- it checks the property that code-generation correctness deliberately does not.
\begin{table}[htbp]
    \centering\small
    \caption{Positioning: who models the deployment target. ``Auto inputs'' = hardware input ranges derived automatically rather than supplied by hand}
    \label{tab:related}
    \begin{tabular}{lcccc}
        \toprule
        \textbf{Approach} & \textbf{IEC 61131-3} & \textbf{Word width} & \textbf{Input ranges} & \textbf{Auto inputs} \\
        \midrule
        PLCverif~\cite{LopezMiguel2022}        & \checkmark & --        & --        & --        \\
        Arcade.PLC~\cite{Weiss2021}            & \checkmark & --        & partial   & --        \\
        ESBMC-PLC+~\cite{ESBMCpr5400}          & \checkmark & --        & --        & --        \\
        CBMC~\cite{Clarke2004cbmc}             & --         & \checkmark & manual    & --        \\
        Frama-C~\cite{Kirchner2015framac}      & --         & \checkmark & contracts & --        \\
        Astr\'ee~\cite{Cousot2005astree}       & --         & \checkmark & annot.\   & --        \\
        \textbf{This work (\arduinotool{})}    & \checkmark & \checkmark & \checkmark & \checkmark \\
        \bottomrule
    \end{tabular}
\end{table}

\section{The Deployment Gap, Formally}
\label{sec:gap}

\subsection{Physical Process and Fault Model}
\label{sub:phys-model}
An open-hardware \gls{plc} is one element of a closed cyber-physical loop (Fig.~\ref{fig:phys}). A physical process -- a tank, a burner, a conveyor -- is observed by \emph{sensors} whose continuous signals are digitized by the \gls{mcu}'s \gls{adc} into integer codes; the scan-cycle logic reads those codes, computes in the \gls{mcu}'s finite-width integer arithmetic, and drives \emph{actuators} through digital pins and \gls{pwm}, which in turn act on the process. A safety property is ultimately a property of this loop -- \emph{``the high-level alarm actuates before the tank overflows''} -- realized in software as an assertion over the program variables that bind to the physical I/O. The cyber and the physical meet at exactly two surfaces: the \gls{adc} that turns a bounded physical quantity into an integer (so each input has a \emph{hardware-realizable range}, not the full machine word), and the integer arithmetic that runs at the target's \emph{word width} (so a computed actuation command can wrap). These two surfaces are the locus of the deployment gap.

\begin{figure}[htbp]
    \centering
    \resizebox{0.8\linewidth}{!}{%
    \begin{tikzpicture}[
        font=\small\sffamily,
        blk/.style={draw, rounded corners=2pt, align=center, minimum height=0.95cm, inner sep=4pt, fill=codebg},
        phys/.style={blk, fill=codeblue!12},
        cyb/.style={blk, fill=figteal!40},
        surf/.style={blk, fill=unsafrd!20},
        arr/.style={-{Stealth}, thick},
        node distance=0.55cm
        ]
        \node[phys] (plant) {Physical\\process};
        \node[phys, right=of plant] (sensor) {Sensor};
        \node[surf, right=of sensor] (adc) {\gls{adc}\\(10-bit:\\$[0,\num{1023}]$)};
        \node[cyb, right=of adc] (logic) {Scan logic\\(16-bit \code{int}\\arithmetic)};
        \node[surf, right=of logic] (pwm) {\gls{pwm}\,/\,\gls{gpio}};
        \node[phys, right=of pwm] (act) {Actuator};
        \draw[arr] (plant)--(sensor);
        \draw[arr] (sensor)--(adc);
        \draw[arr] (adc)--(logic);
        \draw[arr] (logic)--(pwm);
        \draw[arr] (pwm)--(act);
        \draw[arr] (act.south) -- ++(0,-0.6) -| (plant.south) node[pos=0.25, below, font=\scriptsize\sffamily] {actuation closes the loop};
        \node[font=\scriptsize\sffamily, below=0.02cm of adc, text=unsafrd!80!black] {input-range surface};
        \node[font=\scriptsize\sffamily, below=0.02cm of pwm, text=unsafrd!80!black] {word-width surface};
    \end{tikzpicture}%
    }
    \caption{The cyber-physical loop of an open-hardware \gls{plc}. The deployment gap lies at the two interfaces where computation meets physical I/O: the \gls{adc} bounds each input to a hardware-realizable range (10-bit, $[0,\num{1023}]$), and the scan logic executes 16-bit integer arithmetic at the target word width. Properties verified against the abstract logic alone ignore both, rendering the verification unsound with respect to the actual hardware.}    
    \Description{Block diagram of the PLC cyber-physical feedback loop.  Six blocks are arranged left-to-right: Physical process (light blue), Sensor (light blue), ADC (green, labeled '10-bit: [0,\num{1023}]'), Scan logic (teal, labeled '16-bit int arithmetic'), PWM/GPIO (green), and Actuator (light blue).  Solid arrows connect them sequentially.  A feedback arrow curves from the bottom of the Actuator down and left to the bottom of the Physical process, labeled 'actuation closes the loop'.  Two shaded surfaces are highlighted: one below the ADC block (input-range surface) and one below the PWM block (word-width surface) – these represent the deployment gap that mathematical verification must account for, including real hardware constraints such as ADC ranges and integer word widths.}
    \label{fig:phys}
\end{figure}

\subsubsection{Fault model} We do \emph{not} assume a malicious adversary; the fault we target is a \emph{latent computational fault induced by legitimate physical inputs}. A program that is correct in the abstract model can, on a narrow-word target, compute the wrong actuation for a perfectly ordinary sensor reading -- an integer overflow in an \gls{adc}-to-engineering-unit conversion silently inverts a comparison and suppresses a safety action. The fault is triggered by inputs the environment routinely produces (within $D(a)$, the bounded domain $D$ of physically realizable values for analog input $a$), is invisible at the \gls{plc}-logic level existing verifiers analyze, and manifests only once the logic is composed with a specific board's \gls{hal}. Dually, an \emph{unconstrained} input model fabricates faults the environment can never trigger (inputs outside $D(a)$): these are the phantom alarms of \S\ref{sec:eval}. A faithful model must therefore admit exactly the physically realizable inputs -- no more, no less.

\subsubsection{Assumptions} We assume the standard \gls{plc} scan-cycle timing model (\S\ref{sub:exec-model}): inputs are sampled at the top of each scan and held constant during rung evaluation, so we reason about values, not intra-scan transients, and we do not model continuous plant dynamics, sensor noise, or actuator latency -- the property is the per-scan logical assertion, and the environment is the open-world set of realizable input vectors. Within this model, the gap is purely a function of the program, the property, and the board \gls{hal} $H$, which is what makes it amenable to the bit-precise analysis that follows.

Let $P$ be a program and $\varphi$ a safety property. Existing tools check $M_{\text{abs}}(P)\models\varphi$, where $M_{\text{abs}}$ is the abstract scan-cycle model (\S\ref{sub:exec-model}) with idealized arithmetic (word width $w\to\infty$) and free inputs ($D(i)=\mathbb{Z}$). What runs on the board is $M_{\text{dev}}(P,H)$: the same control logic composed with a board \gls{hal} $H$ that (i)~evaluates arithmetic in the \gls{mcu}'s finite word width $w$ and (ii)~constrains each input to its hardware-realizable domain $D(i)$.

\begin{definition}[Deployment gap]
    The deployment gap of $(P,\varphi)$ on target $H$ is the situation in which the two models disagree on $\varphi$: either $M_{\text{abs}}(P)\models\varphi$ while $M_{\text{dev}}(P,H)\not\models\varphi$ (a \emph{missed defect}: safe in the abstract, unsafe on the device), or $M_{\text{abs}}(P)\not\models\varphi$ under free inputs while $M_{\text{dev}}(P,H)\models\varphi$ (a \emph{phantom}: an abstract counterexample that no hardware input can realize). The first direction motivates detection; the second, as RQ2 shows, dominates in practice and renders naive width-checking unsound (Fig.~\ref{fig:gapcases}).
\end{definition}

\begin{figure}[htbp]
    \centering
    \resizebox{0.85\columnwidth}{!}{%
    \begin{tikzpicture}[font=\small,
          cell/.style={draw, semithick, align=center, text width=36mm, minimum height=16mm, inner sep=3pt},
          agree/.style={cell, fill=gray!10},
          miss/.style={cell, fill=red!14},
          phan/.style={cell, fill=yellow!30},
          colh/.style={font=\footnotesize\bfseries, align=center, text width=36mm},
          rowh/.style={font=\footnotesize\bfseries, align=center, text width=17mm}]
        \node[agree] (tl) {agree:\\safe};
        \node[miss, right=1.5mm of tl] (tr) {\textbf{missed defect}\\[1pt]\footnotesize safe in the abstract,\\unsafe on the device};
        \node[phan, below=1.5mm of tl] (bl) {\textbf{phantom}\\[1pt]\footnotesize abstract counterexample\\no hardware input realizes};
        \node[agree, below=1.5mm of tr] (br) {agree:\\genuine defect};
        \node[colh, above=2mm of tl] {$M_{\mathrm{dev}}(P,H)\models\varphi$\\[1pt]\footnotesize safe on device};
        \node[colh, above=2mm of tr] {$M_{\mathrm{dev}}(P,H)\not\models\varphi$\\[1pt]\footnotesize unsafe on device};
        \node[rowh, left=3mm of tl] {$M_{\mathrm{abs}}(P)\models\varphi$\\[1pt]\footnotesize safe in abstract};
        \node[rowh, left=3mm of bl] {$M_{\mathrm{abs}}(P)\not\models\varphi$\\[1pt]\footnotesize free inputs};
        \node[font=\scriptsize\itshape, text=red!60!black, right=1.5mm of tr, anchor=west, text width=22mm]
          {motivates\\ detection};
        \node[font=\scriptsize\itshape, text=black!65, below=1mm of bl, text width=36mm, align=center]
          {dominates in practice (RQ2~\S\ref{sec:eval}) -- makes naive width-checking unsound};
    \end{tikzpicture}%
    }
    \caption{Deployment gap as model agreement (diagonal) vs.\ disagreement (anti-diagonal) on $\varphi$. \textit{Missed defects} (top-right) are device-unsafe but abstract-safe -- requiring detection. \textit{Phantoms} (bottom-left) are abstract counterexamples unreachable in hardware, hence device-safe. RQ2 (\S\ref{sec:eval}) shows phantoms dominate, making naive width-checking unsound. Diagonal cells indicate genuine agreement.}
    \Description{2x2 matrix showing agreement (diagonal) and disagreement (anti-diagonal) between abstract and device model verdicts. Top-left: SAFE/SAFE -- genuinely safe. Bottom-right: UNSAFE/UNSAFE -- genuine defect. Top-right: SAFE/UNSAFE -- missed defect (abstract misses a real hardware fault). Bottom-left: UNSAFE/SAFE -- phantom (abstract counterexample unreachable on real hardware). Phantoms dominate in practice, making naive width checking unsound.
}
    \label{fig:gapcases}
\end{figure}

Three mechanisms populate the gap.

\begin{enumerate}
  \item \textbf{(M1)~Binding errors}: an \code{\%IX}/\code{\%QX} address mapped to the wrong physical pin, or two addresses aliased to one pin, so that an output write clobbers an unrelated signal -- a fault in $\mu$ rather than in the logic, modeled by the \code{io\_map} and detectable as a property over the bound variables.
  
  \item \textbf{(M2)~\gls{adc} quantization}: a physical level maps to one of only $2^{b_{\mathrm{adc}}}$ discrete codes, so a threshold test such as ``alarm exactly at \SI{50.0}{\percent}'' may straddle two representable codes and never hold with equality; we model a read as a nondeterministic representable value, a sound over-approximation of any rounding policy.
  
  \item \textbf{(M3)~Bounded-width arithmetic}: an operation that is exact in $M_{\text{abs}}$ exceeds $[-2^{w-1},2^{w-1})$ and wraps under two's-complement semantics in $M_{\text{dev}}$. Formally, an operation $\odot$ on $w$-bit operands computes $a\odot b \bmod 2^{w}$ reinterpreted as signed, so $a\odot b$ is faithful iff $a\odot b\in[-2^{w-1},2^{w-1})$; outside that interval the device silently produces a different value. (M3) is the dominant cause of overflow defects and the focus of this paper; (M1)~and~(M2) are expressible in the same framework and are future work (\S\ref{sec:future}).
\end{enumerate}

\subsubsection{Motivating example} Listing~\ref{lst:gap} shows an analog high-level alarm from the OpenPLC water/dimmer pattern: a 10-bit \gls{adc} reading is scaled to a percentage, and an alarm fires at $\geq 80\%$. The property is \emph{``if the sensor is at/above the high mark ($adc\geq819$) the alarm must be ON.''} Under $M_{\text{abs}}$ (32-bit) \gls{esbmc} proves it ($k{=}2$). Under $M_{\text{dev}}$ for an AVR board, where \code{int} is 16-bit, the product $adc\times100$ wraps for high readings; \gls{esbmc} reports a violation with counterexample $adc=898$ (tank ${\approx}88\%$), where the alarm stays silent -- a missed high-level alarm caused purely by the word width. The two models differ only in the \gls{hal}-injected arithmetic.
\begin{minipage}{\textwidth}
\begin{lstlisting}[language=C, caption={Deployment-gap motivating example. Abstract \code{int}~(32-bit)$\Rightarrow$SAFE; deployed \code{int16\_t}~(AVR)$\Rightarrow$missed alarm at $adc=898$.}, label={lst:gap}]
    int adc = nondet_int();          // HAL: 10-bit ADC
    __ESBMC_assume(adc >= 0 && adc <= 1023);
    int16_t prod    = adc * 100;     // HAL-faithful width on AVR
    int16_t percent = prod / 1023;   // wraps for high adc
    bool    Alarm   = (percent >= 80);
    if (adc >= 819) assert(Alarm);   // engineer's safety property
\end{lstlisting}
\end{minipage}
\section{Hardware-Faithful Verification}
\label{sec:method}

Given a program $P$ and a \gls{hal} descriptor $H$, the method constructs a \emph{deployment model} $M_{dev}(P,H)$. It verifies the safety property $\varphi$ against it, reusing the \gls{esbmc} backend and \textit{k}-induction unchanged (Fig.~\ref{fig:arch}). We first fix the underlying execution model, then define $H$, the lowering, and its soundness.

\begin{figure}[htbp]
    \centering
    \resizebox{\linewidth}{!}{%
    \begin{tikzpicture}[
        font=\small\sffamily,
        box/.style={draw, rounded corners=2pt, align=center, minimum height=1.0cm, fill=codebg, inner sep=4pt},
        io/.style={box, fill=codeblue!12},
        proc/.style={box, fill=figteal!50},
        hal/.style={box, fill=unsafrd!15},
        verdict/.style={box, minimum width=2.7cm},
        arr/.style={-{Stealth}, thick},
        node distance=0.65cm and 0.8cm
        ]
        \node[io] (xml) {PLCopen XML\\program};
        \node[proc, right=of xml] (fe) {LD front-end /\\MATIEC$\to$\\ GOTO\\ \gls{ir}};
        \node[proc, right=of fe] (low) {Faithful lowering\\width $w$;\ input\\ \code{assume};\ FB abstr.};
        \node[proc, right=of low] (esbmc) {\gls{esbmc}\\$k$-induction\\(bit-precise)};
        \node[hal, above=0.7cm of low] (hal) {\gls{hal} descriptor\\ $H=\langle w,\,b_{\mathrm{adc}},\,b_{\mathrm{pwm}},\,\mu\rangle$};
        \node[verdict, right=of esbmc, fill=codegreen!18, yshift=1.25cm] (safe) {\code{safe}\\robustness proof};
        \node[verdict, right=of esbmc, fill=unsafrd!18] (unsafe) {\code{unsafe}\\realizable witness};
        \node[verdict, right=of esbmc, fill=codebg, yshift=-1.25cm] (unk) {\code{unknown}\\(completeness)};
        \draw[arr] (xml)--(fe);
        \draw[arr] (fe)--(low);
        \draw[arr] (hal)--(low);
        \draw[arr] (low)--(esbmc);
        \draw[arr] (esbmc.east) -- (safe.west);
        \draw[arr] (esbmc.east) -- (unsafe.west);
        \draw[arr] (esbmc.east) -- (unk.west);
    \end{tikzpicture}%
    }
    \caption{\arduinotool{} pipeline. A PLCopen program is lowered to a GOTO \gls{ir}; the faithful lowering (\S\ref{sub:lowering}), parameterized by the board's \gls{hal} descriptor, fixes the word width and injects hardware input bounds before bit-precise \textit{k}-induction. Setting $w$ wide yields the abstract reference $M_{\text{abs}}$; the per-board $w$ yields $M_{\text{dev}}$. The backend is reused unchanged.}
    \Description{Block diagram of the PLC verification pipeline. The flow proceeds left to right. On the left, a light blue block labeled 'PLCopen XML program' points to a teal block labeled 'LD frontend / MATIEC → GOTO IR'. This points to a second teal block labeled 'Faithful lowering width w; input assume; FB abstr.'. A light orange block labeled '\gls{hal} descriptor H = (w, b_adc, b_pwm, mu)' is positioned above the lowering block and points down into it. The lowering block points to a fourth teal block labeled 'ESBMC k-induction (bit-precise)'. On the far right, three verdict blocks are stacked vertically: the top one is light green and labeled 'SAFE robustness proof', the middle one is light red and labeled 'UNSAFE realizable witness', and the bottom one is gray and labeled 'UNKNOWN (completeness)'. Three separate arrows point from the ESBMC block to each of these three verdict blocks.
    }
    \label{fig:arch}
\end{figure}

\subsection{Scan-Cycle Execution Model}
\label{sub:exec-model}
We model a \gls{plc} program as a transition system $\mathcal{T}=(S,s_0,R)$ over a state $S$ that partitions variables into inputs $\mathcal{I}$, outputs $\mathcal{O}$, and internal state $\mathcal{V}$. One scan is the relation $\mathrm{scan}\subseteq S\times S$ that (i)~re-samples every input $i\in\mathcal{I}$, (ii)~executes the program body once, updating $\mathcal{O}$ and $\mathcal{V}$, and the system iterates $\mathrm{scan}$ indefinitely. A safety property $\varphi$ is an assertion required to hold after every scan. Two parameters of this model are left implicit by existing verifiers but fixed on a real device: the \emph{interpretation of arithmetic operators} (the word width $w$ at which $+,-,\times$ are evaluated, with two's-complement wrap outside $[-2^{w-1},2^{w-1})$) and the \emph{input domains} (the set $D(i)\subseteq\mathbb{Z}$ from which each $i\in\mathcal{I}$ may be sampled). The abstract model $M_{\text{abs}}$ takes $w$ large and $D(i)=\mathbb{Z}$; the deployment model $M_{dev}(P,H)$ takes both from $H$ (Fig.~\ref{fig:scan-cycle}.

\begin{figure}[htbp]
    \centering
    \begin{tikzpicture}[
        font=\small,
        >={Stealth[length=2.2mm]},
        node distance=6mm,
        stage/.style={
            draw, rounded corners=2pt, align=center,
            minimum height=9mm, minimum width=20mm,
            inner sep=3pt, fill=blue!4
        },
        param/.style={
            draw, rounded corners=2pt, align=left,
            inner sep=4pt, fill=orange!6, font=\footnotesize
        },
        flow/.style={->, thick}
    ]
    
    \node[stage] (sample) {Sample inputs\\$\forall i\in\mathcal{I}:\ i \leftarrow D(i)$};
    \node[stage, right=10mm of sample] (body) {Execute body\\update $\mathcal{O},\mathcal{V}$ at width $w$};
    \node[stage, right=10mm of body] (check) {Check $\varphi$\\(after scan)};
    
    \draw[flow] ($(sample.west)+(-7mm,0)$) -- node[above, font=\footnotesize] {init} (sample.west);
    \draw[flow] (sample) -- (body);
    \draw[flow] (body) -- (check);
    
    \draw[flow]
        (check.north) -- ++(0,7mm)
        -| node[pos=0.5, above, font=\footnotesize] {next scan \ ($\mathrm{scan}\subseteq S\times S$)}
        (sample.north);
    
    \node[param, below=7mm of body, xshift=0mm] (params) {%
        \textbf{Fixed by $H$ on the device:}\\[2pt]
        word width $w$ \ \ (two's-complement wrap outside $[-2^{w-1},2^{w-1})$)\\
        input domain $D(i)\subseteq\mathbb{Z}$\\[3pt]
        $M_{\text{abs}}$: $w$ large, $D(i)=\mathbb{Z}$ \quad$\mid$\quad
        $M_{dev}(P,H)$: both from $H$%
    };
    \draw[->, dashed, gray] (body.south) -- (params.north -| body.south);
    
    \end{tikzpicture}
    \caption{The scan-cycle execution model. Each scan resamples the inputs, executes the program body once, and ensures that the safety property $\varphi$ holds afterward; the system iterates $\mathrm{scan}$ indefinitely. Two parameters left implicit by existing verifiers are fixed on a real target: the arithmetic word width $w$ and the per-input domains $D(i)$. The abstract model $M_{\text{abs}}$ leaves both unconstrained; the deployment model $M_{dev}(P,H)$ takes both from the \gls{hal} descriptor~$H$.}
    \Description{Block diagram of the \gls{plc} scan-cycle execution model. A loop cycles through three phases: ``Sample inputs'' (reads sensors), ``Execute program body once" (runs scan logic), and Property check: $\varphi$ must hold'' (verifies safety). An arrow loops back from the property check to sampling, labeled ``iterates indefinitely''. Two parameter boxes are shown: ``Word width w'' and ``Input domains D(i)'' — these are fixed by hardware but left unconstrained in the abstract model.}
    \label{fig:scan-cycle}
\end{figure}

\subsection{The \acrfull{hal} Descriptor}
A \gls{hal} descriptor is a tuple
$H=\langle w,\,b_{\mathrm{adc}},\,b_{\mathrm{pwm}},\,\mu\rangle$ where $w$ is the machine word width (\code{int\_bits}), $b_{\mathrm{adc}}$ and $b_{\mathrm{pwm}}$ are the \gls{adc} and \gls{pwm} resolutions in bits, and $\mu$ (the \code{io\_map}) assigns each \gls{iec} address class a role and domain (Eq.~\ref{eq:hal}).
\begin{equation}\label{eq:hal}
    \mu(\code{\%IX})=\langle\text{in},\{0,1\}\rangle,\quad
    \mu(\code{\%IW})=\langle\text{in},[0,2^{b_{\mathrm{adc}}}\!-\!1]\rangle,\quad
    \mu(\code{\%QX}),\mu(\code{\%QW})=\text{out}.
\end{equation}

$H$ is small, declarative, and reusable across all programs for a given board; Table~\ref{tab:boards} lists the instances we use. Crucially, $H$ is \emph{independent of $P$}: the per-program input bounds are obtained by composing $H$ with the addresses already present in $P$.

\subsection{Faithful Lowering}
\label{sub:lowering}
The lowering $P\mapsto M_{dev}(P,H)$ is the identity on the control logic and modifies only the hardware boundary, via four transformations (Algorithm~\ref{alg:lower}):

\begin{enumerate}[leftmargin=2em]
    \item \textbf{Word width.} All machine integers are interpreted as $w$-bit bit-vectors, so every arithmetic operator wraps exactly as on the target. In \gls{esbmc}, this is the word-width setting for the architecture; no program rewriting is required.
    
    \item \textbf{Input-range annotation.} For each input $i\in\mathcal{I}$ with address $a$, immediately after $i$ is re-sampled in $\mathrm{scan}$ we inject $\mathtt{assume}\big(i\in D(a)\big)$ with $D(a)$ the domain from $\mu$. Booleans receive $\{0,1\}$ (already implied by their type and thus a no-op); analog/integer inputs receive $[0,2^{b_{\mathrm{adc}}}-1]$. This single step is what restores soundness in practice.
    
    \item \textbf{Quantization and saturation.} An \gls{adc} read is modeled as a nondeterministic value within the representable set -- a sound over-approximation that subsumes any rounding policy -- and a \gls{pwm} write is clamped to $[0,2^{b_{\mathrm{pwm}}}-1]$.
    
    \item \textbf{Non-linear abstraction.} A function block whose body uses floating-point or transcendental operations (e.g.\ \gls{pid}, thermistor linearization) is replaced by a fresh nondeterministic output constrained to its declared range. This is a sound over-approximation that removes \gls{smt}-intractable arithmetic outside the integer deployment-gap claim.
\end{enumerate}

\begin{algorithm}[htbp]
    \caption{Faithful lowering $P,H \mapsto M_{dev}$}
    \label{alg:lower}
    \begin{algorithmic}[1]
        \State set machine word width to $H.w$ \Comment{bit-precise arithmetic}
        \For{each input $i$ with address $a$ sampled in $\mathrm{scan}$}
        \State emit $i \gets \mathtt{nondet}()$ \Comment{open-world re-sample (existing)}
        \If{$\mu(a)$ is analog/integer} emit $\mathtt{assume}(0 \le i \le 2^{H.b_{\mathrm{adc}}}-1)$ \EndIf
        \EndFor
        \For{each analog output write $o \gets e$}
        \State emit $o \gets \mathtt{clamp}(e, 0, 2^{H.b_{\mathrm{pwm}}}-1)$
        \EndFor
        \For{each non-linear function-block instance $f$}
        \State replace body of $f$ by $\mathtt{out}_f \gets \mathtt{nondet\_in\_range}()$
        \EndFor
        \State \Return resulting model, to be checked with overflow VCs + $k$-induction
    \end{algorithmic}
\end{algorithm}

\subsection{Soundness and Gap Diagnosis}
\label{sub:soundness}
Let $\mathrm{Reach}_{\mathrm{dev}}$ be the states reachable on the physical device (arithmetic at width $w$, inputs from their true hardware ranges $D(a)$), and $\mathrm{Reach}_{M_{dev}}$ those of the constructed model. The soundness argument assumes the descriptor is \emph{faithful}: $H.w$ equals the target's \code{int} width and, for every input address $a$, $D(a)$ contains the true hardware range of that input (Fig.~\ref{fig:soundness}). Under this premise the lowering preserves all device behavior:

\begin{figure}[htbp]
    \centering
    \begin{tikzpicture}[font=\small, >={Stealth[length=2.2mm]}]
        
        \fill[pattern=north east lines, pattern color=red!18, opacity=0.9]
            (0,0) rectangle (12.2,5.0);
        \draw[draw=red!35!gray, dashed, thick] (0,0) rectangle (12.2,5.0);
        \node[red!45!black, anchor=north west, align=left, font=\scriptsize] at (0.2,4.95)
            {inputs unconstrained: $D(i)=\mathbb{Z}$};
        \node[red!45!black, anchor=north east, align=right, font=\scriptsize] at (11.8,4.95)
            {inputs $\notin D(a)$: removed by \texttt{assume} (phantom alarms of $M_{\text{abs}}$)};
        
        \fill[blue!8] (6.3,2.5) ellipse (5.4cm and 1.8cm);
        \draw[blue!60, thick] (6.3,2.5) ellipse (5.4cm and 1.8cm);
        \node[blue!55!black, anchor=north, align=center, font=\footnotesize] at (6.3,4.05)
            {$\mathrm{Reach}_{M_{dev}(P,H)}$ (inputs $\in D(a)$, width $w$)};
        
        \fill[green!12] (5.3,2.4) ellipse (1.9cm and 1.05cm);
        \draw[green!50!black, thick] (5.3,2.4) ellipse (1.9cm and 1.05cm);
        \node[green!40!black, align=center, font=\footnotesize] at (5.3,2.4)
            {$\mathrm{Reach}_{\mathrm{dev}}$\\(physical device)};
        
        \node[align=left, font=\scriptsize, text=blue!55!black] at (9.0,3.0)
            {non-linear over-approx.\\(Lemma~\ref{lem:nonlin})};
        
        \node[star, star points=5, star point ratio=2.3, fill=orange!80!red,
              draw=black, inner sep=1.3pt] (cex) at (7.6,2.15) {};
        \node[anchor=west, align=left, font=\scriptsize] at (7.75,2.05)
            {reported CEX:\\inputs $\in D(a) \Rightarrow$ \emph{realizable}};
        
    \end{tikzpicture}
    \caption{Soundness as state-set containment (Theorem~\ref{thm:sound}). Under a faithful descriptor~$H$, the states reachable on the physical device are contained in those of the model, $\mathrm{Reach}_{\mathrm{dev}}\subseteq\mathrm{Reach}_{M_{dev}(P,H)}$; the only slack is the non-linear over-approximation (Lemma~\ref{lem:nonlin}). Hence $M_{dev}(P,H)\models\varphi$ implies the device satisfies $\varphi$. Because the injected \texttt{assume}s admit only inputs in $D(a)$, every reported counterexample (orange star) is physically realizable; the hatched band of states reachable only with inputs $\notin D(a)$ -- the phantom alarms of the unconstrained $M_{\text{abs}}$ -- is excluded.}
    \Description{Venn diagram of state-set containment. Reach_dev (physical device states) is a subset of Reach(M_dev(P,H)) (model states) — proving soundness. A hatched band shows states reachable only with inputs outside D(a), excluded by the assume constraints. An orange star marks a physically realizable counterexample.}
    \label{fig:soundness}
\end{figure}

\begin{theorem}[Soundness]
\label{thm:sound}
    For a \gls{hal} descriptor $H$ that correctly captures the target's word width and input domains, $\mathrm{Reach}_{\mathrm{dev}}\subseteq\mathrm{Reach}_{M_{dev}(P,H)}$. Consequently, if $M_{dev}(P,H)\models\varphi$ then the deployed program satisfies $\varphi$ on the device (modulo the non-linear abstraction). Every counterexample reported on $M_{dev}(P,H)$ uses inputs in $D(a)$ and is therefore \emph{physically realizable}.
\end{theorem}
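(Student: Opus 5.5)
The plan is a simulation argument: induction on the number of scans shows that every device execution is matched step for step by an execution of $M_{dev}(P,H)$. First fix the semantics. The device is the transition system $\mathcal{T}_{\mathrm{dev}}=(S,s_0,R_{\mathrm{dev}})$ of \S\ref{sub:exec-model}. Its scan relation samples each $i\in\mathcal{I}$ from the true hardware range $D^{\mathrm{hw}}(a)$ and executes the body with $w$-bit two's-complement arithmetic. The model $\mathcal{T}_{M}=(S,s_0,R_M)$ is produced by Algorithm~\ref{alg:lower}. We take the state space $S$ and initial state $s_0$ to be literally the same in both systems, since the lowering is the identity on control logic and only adds \texttt{assume}s, clamps and nondeterministic function-block outputs.

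The core is a one-step inclusion lemma: $R_{\mathrm{dev}}\subseteq R_M$. Take a device transition $(s,s')$ that uses input vector $\vec v$. Step 1 (inputs): $\mathtt{nondet}()$ may return $\vec v$, and the injected $\mathtt{assume}(i\in D(a))$ is satisfied because faithfulness gives $D^{\mathrm{hw}}(a)\subseteq D(a)$ (for \code{\%IX}, $\{0,1\}$; for \code{\%IW}, $[0,2^{b_{\mathrm{adc}}}-1]$). The same argument covers quantization (M2): any concrete rounding produces some representable code, so the nondeterministic read can pick it. Step 2 (arithmetic): since $H.w$ equals the target \code{int} width, each operator in the body is evaluated by ESBMC's bit-vector semantics exactly as on the device, modulo $2^w$ reinterpreted as signed (M3). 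This makes the body a deterministic function of $(s,\vec v)$ that is identical in both systems. Step 3 (outputs): the \gls{pwm} clamp to $[0,2^{b_{\mathrm{pwm}}}-1]$ agrees with the device, whose \code{analogWrite} saturates to that range by the Arduino core. Step 4 (non-linear blocks): this is isolated as the ``modulo'' clause via Lemma~\ref{lem:nonlin}. The device's actual output lies within the declared range, so $\mathtt{nondet\_in\_range}()$ can choose it.

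Given $R_{\mathrm{dev}}\subseteq R_M$ and a common $s_0$, a straightforward induction on path length gives $\mathrm{Reach}_{\mathrm{dev}}\subseteq\mathrm{Reach}_{M_{dev}(P,H)}$. The safety corollary then follows because $\varphi$ is a state assertion checked after every scan. If $\varphi$ holds on every model-reachable state, it holds on every device-reachable state.

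For realizability, I would read the statement carefully: it claims only that the counterexample's inputs lie in $D(a)$. That is immediate, since every model path passes the injected \texttt{assume}s, so any witness trace has all sampled inputs in $D(a)$. With a faithful descriptor, $D(a)$ is in practice exactly the ADC range, so these inputs are physically producible. I would remark, however, that full realizability of the whole trace needs the converse inclusion, which holds only when $D(a)=D^{\mathrm{hw}}(a)$ and no non-linear abstraction fired on the trace.

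I expect the main obstacle to be Step 2. Bit-exact agreement requires that the width at which ESBMC evaluates each expression matches the target's promotion rules. On AVR, integer promotion to 16-bit \code{int} and MATIEC's typing of \code{INT}/\code{DINT} must both be respected, and signed overflow is formally undefined in C. My first move would be to state as part of faithfulness that the toolchain implements wrap-around at width $w$ (as \code{avr-gcc} does in practice) and that ESBMC uses the same usual-arithmetic-conversion rules with \code{int} set to $w$ bits.
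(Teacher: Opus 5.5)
Your argument is the paper's containment argument made explicit. The paper's proof also treats width-$w$ arithmetic as an exact match, uses Lemma~\ref{lem:nonlin} to show the block abstraction only adds behaviours, and observes that the \texttt{assume}s cut only inputs outside $D(a)$. Your one-step inclusion $R_{\mathrm{dev}}\subseteq R_M$ followed by induction on path length is the natural way to make that rigorous. Your Step~2 caveats (integer promotion, signed overflow being undefined in C, MATIEC typing) are reasonable additions to a faithfulness premise the paper leaves implicit.

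\textbf{Step~3 does not go through as written.} The paper's proof never mentions the PWM clamp, and the theorem's hypothesis covers only word width and input domains. So this step needs an extra fact about the device, and the one you cite is not true of \texttt{ArduinoCore-avr}. Apart from the special cases $0$ and $255$, \code{analogWrite} writes \texttt{val} straight into the timer's output-compare register with no range check; on the 8-bit timers an out-of-range value is simply truncated modulo $256$. It does not saturate.

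There is a second problem that does not depend on the pin. If the clamp rewrites the program variable $o$ itself, as in Algorithm~\ref{alg:lower}, then for out-of-range $e$ the model's state differs from the device's process image, which still holds $e$. Inclusion over a literally shared state space therefore fails. You can repair this in either of two ways:
\begin{itemize}
\item let the clamp act only on a hardware-side duty value that $\varphi$ never mentions, and project it away; or
\item add faithfulness of the output map to the hypothesis.
\end{itemize}

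\textbf{Your realizability remark is correct and sharper than the paper.} The paper's proof infers that counterexample inputs are ``values the hardware can produce'' from the premise $D(a)\supseteq D^{\mathrm{hw}}(a)$, and that premise does not give it. Realizable inputs need $D(a)\subseteq D^{\mathrm{hw}}(a)$. A realizable whole trace additionally needs every nondeterministic block output chosen along it to be one the real block could compute.
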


The argument is a containment: width-$w$ arithmetic is reproduced exactly (not abstracted), each non-linear function block is replaced by a nondeterministic over-approximation of its declared outputs, and the injected \code{assume}s remove only inputs the device never produces. So no device behavior is excluded, and every counterexample is realizable. The supporting lemma and full proof are deferred to Appendix~\ref{app:soundness}.

The dual -- that the abstract model is \emph{unsound for the device} -- also follows: without the \code{assume}s, $M_{\text{abs}}$ admits inputs outside $D(a)$, so an $M_{\text{abs}}$ counterexample need not be realizable (Scenario~2, \S\ref{sub:wt-fp}). \textbf{Gap diagnosis} compares the two models on the same program and property with the \emph{same} input bounds, so they differ only in $w$: a verdict that is \code{safe} under $M_{\text{abs}}$ but \code{unsafe} under $M_{dev}$ isolates a defect attributable solely to the deployment word width, reported with a realizable witness. \textbf{Completeness} is inherited from the backend and is partial: \gls{bmc} falsifies up to depth $k$ and $k$-induction proves when the inductive step closes; for programs whose proof depends on loop-carried internal state the step may not close, yielding \code{unknown} (\S\ref{sec:eval}) -- an incompleteness of the prover, independent of the lowering.

\section{\arduinotool{}: Arduino Instantiation and Implementation}
\label{sec:arduino}

\subsection{Arduino Instantiation}
\label{sub:arduino-inst}
We instantiate the method for Arduino. The \gls{hal} library fixes $H$ per board with parameters taken from the official Arduino cores~\cite{ArduinoCoreAVR}: AVR (\code{ArduinoCore-avr}) gives \code{int\_bits}=16, \code{adc\_bits}=10, \code{pwm\_bits}=8; 32-bit references (Due, UNO~R4, OPTA) give \code{int\_bits}=32 with 12--14-bit \glspl{adc} (Table~\ref{tab:boards}). Our primary focus is the 16-bit AVR boards (Uno/Nano/Mega), the dominant low-cost open-hardware \glspl{plc}. The pipeline ingests a PLCopen program through the native \gls{ld} frontend, applies the faithful lowering for the chosen board, and verifies with \gls{esbmc} at the target word width; the reference run uses the abstract model (wide integers, same input bounds).

\begin{table}[htbp]
    \centering\small
    \caption{Board \gls{hal} descriptors (parameters from the official Arduino cores)}
    \label{tab:boards}
        \begin{tabular}{llccc}
        \toprule
        \textbf{Board} & \textbf{Core / MCU} & \textbf{\code{int}} & \textbf{ADC} & \textbf{PWM} \\
        \midrule
        Uno/Nano/Mega, CONTROLLINO, M-Duino & -avr (AVR) & 16-bit & 10 & 8 \\
        Due & -sam (ARM) & 32-bit & 12 & 8 \\
        UNO R4 & -renesas (ARM) & 32-bit & 14 & 8 \\
        OPTA & -mbed (ARM) & 32-bit & 12 &  --  \\
        \bottomrule
    \end{tabular}
\end{table}

\subsection{Implementation}
\label{sec:impl}
A design goal of the method is to be \emph{minimally invasive}: the deployment model should reuse a mature verifier rather than reimplement one. We therefore realize the lowering inside the existing \gls{esbmc} \gls{ld} frontend, changing only where the model meets the hardware boundary.

\subsubsection{Where the Hook Sits}
The \gls{ld} frontend builds a GOTO \gls{ir} in which one scan is a loop body that first \emph{re-samples} the inputs (the open-world sensor model of \S\ref{sub:bmc-primer}) and then evaluates the rungs. Inputs are re-sampled at two sites: the program-level \code{READ\_INPUTS} step, which assigns each physical input a fresh nondeterministic value, and the function-block input-sampling step, which does the same for an instantiated block's formal inputs. The annotator attaches at exactly these two sites, immediately \emph{after} each nondeterministic assignment, so that the bound applies to the freshly sampled value on every scan.

\subsubsection{What the Hook Emits}
For an input variable of integer type, the hook synthesizes a guard $0\le v\le \mathtt{ADCMAX}$ as a relational \gls{ir} expression. It wraps it in a \code{assume} node inserted into the scan body (Listing~\ref{lst:hook}). The bound is taken from the board descriptor (\code{HAL\_ADC\_MAX}, default \num{1023} for the AVR 10-bit \gls{adc}); Boolean inputs already range over $\{0,1\}$ by type and need no guard, and floating-point inputs are left unconstrained (they fall under the non-linear abstraction). Gating on the static type keeps the transformation sound for mixed-type programs.

\begin{lstlisting}[language=C, basicstyle=\ttfamily\scriptsize, caption={The annotator hook (abridged): after an input is re-sampled, bind it to the ADC domain.}, label={lst:hook}]
codet ld_converter::hal_input_bound(const symbol_exprt &in) const {
    if (getenv("HAL_ANNOTATE") == nullptr) return code_skipt(); // baseline
    if (!is_integer(in.type()))            return code_skipt(); // BOOL/REAL
    long adcmax = env_long("HAL_ADC_MAX", 1023);                // board param
    exprt lo = binary_relation_exprt(in, ">=", from_integer(0, in.type()));
    exprt hi = binary_relation_exprt(in, "<=", from_integer(adcmax, in.type()));
    return code_assumet(and_exprt(lo, hi));                     // assume bound
    }
\end{lstlisting}

\subsubsection{Configurability and Engineering}
The model is toggled by \code{HAL\_ANNOTATE} and parameterized by \code{HAL\_ADC\_MAX}, so a \emph{single binary} reproduces both the naive baseline and hardware-faithful configuration -- the basis for the controlled before/after in \S\ref{sec:eval}. Word width is set by \gls{esbmc}'s \code{--16}/\code{--32} flag, and overflow is governed by the standard verification condition; neither requires modification. The hook is $\approx$40~lines, confined to the frontend, and adds no dependencies: the GOTO encoding, \gls{smt} backend, \textit{k}-induction engine, and solver are unchanged. This confinement lets Theorem~\ref{thm:sound} rest on the backend's existing soundness -- we add assumptions and set a width, but do not alter the proof engine. The same hook bounds function-block inputs, including \code{IN1} in the worked example (\S\ref{sub:wt-fp}).
\section{The Approach in Practice}
\label{sec:walkthrough}

We walk through two end-to-end scenarios on real artifacts. Each is presented in two passes: an \emph{Engineer's view} (what happens and what to do about it), followed by an \emph{under the hood} paragraph with the formal detail. The two scenarios cover the two faces of the deployment gap: detecting a genuine hardware-induced defect (\S\ref{sub:wt-detect}), and \emph{not} raising a false alarm on correct code (\S\ref{sub:wt-fp}) -- the latter being, in practice, the more common and the reason the input model is essential.

\subsection{Scenario 1: A Program that is Safe in the Editor but Unsafe on a Uno}
\label{sub:wt-detect}

\subsubsection{Engineer's View} Consider the analog high-level alarm of Listing~\ref{lst:gap}: a tank level is read from a 10-bit \gls{adc}, scaled to a percentage, and an alarm is asserted to fire at $\geq 80\%$. In the OpenPLC editor, and under any abstract verifier, the logic is correct -- the alarm always fires when the tank is high. Deployed to an \textbf{Arduino Uno} (an AVR board), however, the alarm \emph{silently fails} at high levels. \arduinotool{} reports the violation and, crucially, hands back the exact sensor reading that triggers it:
\begin{minipage}{\textwidth}
\begin{lstlisting}[language={}, basicstyle=\ttfamily\scriptsize, caption={Running the deployment check; the counterexample is a real ADC reading.}, label={lst:cmd-detect}]
    # AVR / Uno target (16-bit), HAL input model enabled
    $ HAL_ANNOTATE=1 esbmc alarm.ld --16 --overflow-check --k-induction
      Counterexample: adc = 898
      Violated property: arithmetic overflow on mul (adc * 100)
      VERIFICATION FAILED
    # 32-bit reference (e.g., OPTA), same input bounds
    $ HAL_ANNOTATE=1 esbmc alarm.ld --32 --overflow-check --k-induction
      VERIFICATION SUCCESSFUL
\end{lstlisting}
\end{minipage}
\noindent The engineer reads this directly: \emph{at a tank reading of 898 ($\approx 88\%$) the alarm does not fire on a 16-bit board}. The physical consequence closes the loop of Fig.~\ref{fig:phys}: the high-level alarm that should trip the inflow never actuates, so the tank keeps filling past the high mark -- a safety action silently lost to a word-width overflow, not to any logic error the Engineer could see in the editor. The fix is equally concrete -- use a wider intermediate (\code{DINT}) for the scaling, or deploy on a 32-bit board -- and re-running confirms the repair.

\subsubsection{Under the Hood} The two runs are the deployment model $M_{dev}$ and the abstract reference $M_{\text{abs}}$ of \S\ref{sec:method}, differing only in the machine word width. The \gls{hal} annotator has constrained the input with \code{assume(adc >= 0 \&\& adc <= \num{1023})}, so the counterexample is \emph{realizable}: at the 16-bit width $adc\times100$ for $adc=898$ equals \num{89800}, which exceeds INT16\_MAX=\num{32767} and wraps, collapsing \code{percent} to $23$ and falsifying the alarm. \gls{esbmc} proves the abstract model by \textit{k}-induction at $k{=}2$ and refutes the deployment model at the base case, attributing the verdict change solely to the word width.

\subsection{Scenario 2: Avoiding a False Alarm on Correct Code}
\label{sub:wt-fp}

\subsubsection{Engineer's View} Take a real program from our corpus, \code{lassignment}, whose valve handler contains the assignment in Listing~\ref{lst:valves} -- an ordinary offset of an integer input. If one naively asks a verifier to check 16-bit overflow, it reports this correct code as \textbf{unsafe} (Listing~\ref{lst:cmd-fp}):
\begin{lstlisting}[language={}, basicstyle=\ttfamily\scriptsize, caption={Real code from \code{valves\_handler} (program \code{lassignment}).}, label={lst:valves}]
real_value := IN1 - 5;   (* IN1 is an INT input *)
\end{lstlisting}
\begin{lstlisting}[language={}, basicstyle=\ttfamily\scriptsize, caption={Naive checking flags a phantom overflow; the witness is an impossible sensor value.}, label={lst:cmd-fp}]
    $ esbmc lassignment.ld --16 --overflow-check          # NAIVE: no input model
      Counterexample: IN1 = -32764
      Violated property: arithmetic overflow on sub (IN1 - 5)
      VERIFICATION FAILED
\end{lstlisting}
\noindent A practitioner would rightly distrust this: the ``bug'' requires \code{IN1 = \num{-32764}}, a value no real sensor or \gls{adc} can ever produce. Chasing it wastes effort, and a tool that produces dozens of such reports (we measured 54 across the corpus) is ignored. Enabling the \gls{hal} model makes the report disappear---the code is accepted (Listing~\ref{lst:cmd-fp2}):
\begin{lstlisting}[language={}, basicstyle=\ttfamily\scriptsize, caption={With the HAL input model the phantom is gone.}, label={lst:cmd-fp2}]
    # HAL model enabled (ADC inputs bounded to [0,1023])
    $ HAL_ANNOTATE=1 esbmc lassignment.ld --16 --overflow-check
      no arithmetic overflow on sub (IN1 - 5)
\end{lstlisting}

\subsubsection{Under the Hood} The annotator emits, immediately after \code{IN1} is sampled, \code{assume(IN1 >= 0 \&\& IN1 <= 1023)} (the AVR 10-bit \gls{adc} domain from the board descriptor). The overflow verification condition \code{!overflow("-", IN1, 5)} is then discharged: under the bound, \code{IN1-5} lies in $[-5,\num{1018}]$, which cannot underflow a 16-bit integer, so the counterexample at $\text{INT16\_MIN}$ is no longer reachable. The witness in the naive run was an artifact of an unconstrained input, not a property of the program -- exactly the unsoundness-in-practice the method removes.

\section{Evaluation}
\label{sec:eval}

We answer:
\begin{itemize}[leftmargin=2em]
  \item \textbf{RQ1.} Is the deployment gap real, and which arithmetic patterns are vulnerable?
  
  \item \textbf{RQ2.} On real programs, is \emph{naive} width-aware verification (no input model) usable?
  
  \item \textbf{RQ3.} Does the \gls{hal} annotator make it sound -- eliminating false alarms while preserving proofs?
  
  \item \textbf{RQ4.} What is the performance, and what are the method's completeness limits?
\end{itemize}

\subsection{Setup and Methodology}
All experiments use \gls{esbmc}~v8.3.0 with our annotator hook and the Z3 backend, on a commodity laptop. The \gls{hal} model is toggled per run via \code{HAL\_ANNOTATE}, so that the naive baseline and the hardware-faithful configuration are produced by the \emph{same binary}, thereby isolating the input model as the only independent variable. Word width is set by \code{--16}/\code{--32} and overflow by the standard signed/unsigned overflow property; \code{HAL\_ADC\_MAX=1023} (AVR 10-bit). 

We use two corpora (Table~\ref{tab:corpus}). The \emph{real corpus} comprises 123 third-party PLCopen~XML programs that the \gls{ld} frontend ingests natively, drawn from two public datasets at pinned commits and archived in the artifact: a water-treatment Ladder-Logic dataset~\cite{Iacobelli2024} (60 programs) and an \gls{iec}~61131-3 clone-detection benchmark~\cite{Rosiak2022clonedataset} (63 of 65 programs ingest). 

The \emph{controlled corpus} is a taxonomy of \gls{hal}-effect classes -- analog scaling and accumulation, each in a deliberately buggy and a correctly-coded (\code{\_ok}) variant -- crossed with the board descriptors, designed to exhibit the isolation gap, free of the confounds of arbitrary third-party code. The real corpus answers prevalence, false-alarm, and robustness questions; the controlled corpus isolates the mechanism and characterizes which patterns are vulnerable.

\begin{table}[htbp]
    \centering\small
    \caption{Evaluation corpora. The real corpus is third-party and pinned; the controlled corpus is authored to isolate the deployment gap}
    \label{tab:corpus}
        \begin{tabular}{llcl}
        \toprule
        \textbf{Corpus} & \textbf{Source} & \textbf{\#} & \textbf{Role} \\
        \midrule
        Real & Water-treatment LD~\cite{Iacobelli2024}        & 60  & prevalence, FP, robustness \\
        Real & Clone-detection~\cite{Rosiak2022clonedataset}  & 63  & prevalence, FP, robustness \\
        Ctrl.\ & \gls{hal}-effect taxonomy (this work)              & 20  & isolate gap, pattern analysis \\
        \bottomrule
    \end{tabular}
\end{table}

\subsection{RQ1: The Gap Is Real but Rare}
\subsubsection*{The Gap is Real (Controlled Corpus)} The same program and property flip from \code{safe} under $M_{\text{abs}}$ to \code{unsafe} under $M_{dev}$ on the 16-bit boards, each with a realizable witness (Table~\ref{tab:matrix}); the correctly-coded \code{\_ok} variants never flip, on any board, confirming that the flips track a genuine hardware-induced overflow rather than a modelling artifact. On the 32-bit board, the buggy variants are themselves \code{safe}, isolating the word width as the sole cause -- the same program is correct on a Due and unsafe on a Uno.

\subsubsection*{Which Patterns are Vulnerable}
The flip condition is precise and instructive for practitioners. With an input bounded to a 10-bit \gls{adc} ($v\le\num{1023}$), a 16-bit overflow ($>\num{32767}$) requires the arithmetic to magnify $v$ past the word range: \emph{multiply by a constant $\gtrsim 33$} (e.g.\ percentage or milli-unit scaling, $v\times100$), \emph{accumulation} of many readings, or \emph{squaring}. Conversely, multiply-by-small-constant ($v\times2$), additive offsets ($v\pm c$), and comparisons all remain within range and are provably safe -- which is why the \code{\_ok} variants, and the bulk of real code, do not flip.

\subsubsection*{The Gap is Rare in the Corpora We Surveyed}
We surveyed the real corpus and a larger pool of public Structured-Text libraries for the vulnerable pattern. Within these corpora, the pattern is essentially absent: across 341 \gls{st} programs in a verification benchmark library, \emph{none} contained a multiply-by-a-three-or-more-digit constant; the integer arithmetic that does occur is dominated by $\times2$, offsets, clamping, and comparisons -- operations that bounded inputs keep safe. The public \gls{plc} code we could ingest is overwhelmingly Boolean and timer logic with small-magnitude integer arithmetic. We therefore report, honestly, that in these corpora the deployment gap is \emph{genuine and sound to detect but infrequent}; the analog-scaling regime in which it bites (e.g.\ \gls{adc}-to-engineering-unit conversion on a narrow-word board) is real but not the common case here. We caution that this prevalence finding is scoped to the integer/Boolean \gls{ld} programs our frontend ingests (\S\ref{sec:threats}); analog-heavy process-control corpora may exhibit it more often. This finding reframes the contribution: the value of hardware-faithful verification lies less in bug count than in the \emph{soundness} it restores (RQ2--RQ3).

\begin{table}[htbp]
\centering\small
\caption{Verdict matrix: identical logic and properties, abstract model vs.\ per-board deployed model. \code{Unsafe} = ESBMC found a counterexample}
\label{tab:matrix}
\begin{tabular}{l l cccc}
\toprule
\textbf{Class} & \textbf{Abstract} & \textbf{CTRL Micro} & \textbf{CTRL Maxi} & \textbf{M-Duino} & \textbf{Arduino OPTA} \\
& & \multicolumn{1}{c}{16-bit} & \multicolumn{1}{c}{16-bit} & \multicolumn{1}{c}{16-bit} & \multicolumn{1}{c}{32-bit} \\
\midrule
A\_bug & \code{safe} & \code{unsafe} & \code{unsafe} & \code{unsafe} & \code{safe} \\
A\_ok & \code{safe} & \code{safe} & \code{safe} & \code{safe} & \code{safe} \\
B\_bug & \code{safe} & \code{unsafe} & \code{unsafe} & \code{unsafe} & \code{safe} \\
B\_ok & \code{safe} & \code{safe} & \code{safe} & \code{safe} & \code{safe} \\
\bottomrule
\end{tabular}
\end{table}
\subsection{RQ2: Naive Width-Aware Verification Is Unusable}
A natural first attempt at hardware-faithful verification is to set the word width and check overflow, \emph{without} modeling the inputs. We ran exactly this on the 123 real programs at 16~bits (Table~\ref{tab:realcorpus}, ``Naive''). 54/123 (\SI{44}{\percent}) report \code{unsafe}, and inspection shows \emph{every} one is a spurious overflow on an unphysical input. The reports are strikingly uniform: all are overflow on a subtraction of the form $\textit{IN}-c$ for a small constant $c$, with counterexamples pinned at the boundary of the signed range -- e.g.\ \textit{IN}=-$\num{32764}\approx \text{INT16\_MIN}$, so that \textit{IN}-5 underflows. No 10-bit \gls{adc}, and indeed no physical input wired through the standard Arduino core, can ever present such a value. 

Two controls confirm the diagnosis. First, the same 54 programs fail \emph{identically} at 16 \emph{and} 32~bits: the trigger is the unbounded input, not the word width (a width-dependent defect would, by definition, differ between the two). Second, the remaining programs are split into 32 that verify \code{safe} and 37 that are \code{unknown} for reasons unrelated to inputs (\S\,RQ4). Naive width-aware checking therefore exhibits a \SI{44}{\percent} false-alarm rate while finding \emph{zero} genuine defects -- a signal-to-noise ratio that makes it unusable for assurance, since an engineer cannot distinguish its phantom reports from real ones without manually re-deriving each input's physical range.

\begin{table}[htbp]
    \centering\small
    \caption{Real corpus (123 programs), 16-bit overflow checking: naive vs.\ \gls{hal}-annotated.}
    \label{tab:realcorpus}
    \begin{tabular}{lccc}
        \toprule
        \textbf{Configuration} & \code{safe} & \code{unknown} & \code{unsafe} \\
        \midrule
        Naive (no input model) & 32 & 37 & \textbf{54} \\
        \gls{hal}-annotated          & 32 & 91 & \textbf{0} \\
        \bottomrule
    \end{tabular}
\end{table}

\subsection{RQ3: The \gls{hal} Annotator Restores Soundness}
Enabling the \gls{hal} model -- the \emph{same} binary, the only change being the injected input bounds -- transforms the picture (Table~\ref{tab:realcorpus}, ``\gls{hal}-annotated''). \textbf{All 54 false alarms are eliminated: the \code{unsafe} count drops from 54 to 0}, and every one of the 32 robustness proofs is preserved. The mechanism is exactly as analyzed: bounding each integer input to its realizable \gls{adc} range constrains, for instance, $\textit{IN}-5$ to $[-5,\num{1018}]$, which cannot underflow a 16-bit integer, so the boundary counterexample becomes unreachable. No genuine defect is suppressed in the process, because (RQ1) none of the real programs contains the vulnerable high-magnitude arithmetic; were one present, the bound would leave its overflow witness intact and realizable. 

The 32 preserved \code{safe} verdicts are themselves a positive result: they are \emph{robustness certificates} -- proofs that those programs do not overflow at 16~bits for any physically realizable input, the kind of evidence a deployment safety case requires. This is the central finding of the paper: the \gls{hal} input model makes width-aware verification of open-hardware \glspl{plc} \emph{sound in practice} where naive checking is not, converting a \SI{44}{\percent}-noise tool into one that issues only realizable reports.

\subsection{Cross-Tool Validation: the Gap Is Not Tool-Specific}
\label{sub:baseline}
A natural objection is that the phantom alarms are an \gls{esbmc} idiosyncrasy. They are not. We reproduced the two faces of the gap on an independent, peer \gls{smt}-based bounded model checker, \gls{cbmc}~\cite{Clarke2004cbmc}, run on C mirrors of the same logic with its native machine-width flag (\code{--16}/\code{--32}) and signed-overflow checking (Table~\ref{tab:baseline}). Three findings stand out. First, the phantom is \emph{cross-tool}: on the unconstrained \code{IN1-5} pattern \gls{cbmc} reports the very same spurious overflow, with the \emph{identical} witness \textit{IN1}=\num{-32764} that \gls{esbmc} produces -- confirming the false alarm is a property of naive width-aware checking, not of one engine. Second, it is \emph{input-driven, not width-driven}: the phantom fails at \emph{both} 16 and 32~bits, whereas the genuine alarm defect fails only at 16~bits (with the realizable witness $adc=819$) and verifies \code{safe} at 32~bits -- the cross-tool analog of our RQ2 control. Third, and decisively for our contribution, \gls{cbmc} can \emph{use} an input bound but cannot \emph{derive} one: removing the phantom required us to insert the bound by hand as \code{\_\_CPROVER\_assume(0<=IN1<=1023)}, exactly the manual annotation that C-level analyzers (Frama-C's ACSL contracts, Astr\'ee's \code{known\_fact}s) also demand. 

Our method automatically recovers the same bound from the PLCopen \code{\%IW} address -- the ``Auto inputs'' column of Table~\ref{tab:related} -- which makes it deployable at corpus scale without per-program specification effort. We confirmed the same three findings on a second, architecturally different analyzer: Frama-C's EVA abstract interpreter~\cite{Kirchner2015framac} (v32.1, \code{-machdep avr\_16}) raises the identical signed-overflow alarm \code{assert -32768 <= IN1-5} on the naive program; bounding the input with a handwritten \code{Frama\_C\_interval(0,1023)} (Frama-C's analogue of an ACSL contract, again supplied manually) makes it disappear; and the genuine \code{adc*100} defect alarms at \code{avr\_16} but is clean at \code{x86\_32}. Astrée, in the same lineage, detects overflow under a configured target model with annotation-supplied input bounds. Across three independent engines, then, the phantom, its input-driven nature, and the manual-annotation requirement are invariant -- only the \emph{automatic} derivation of the bound from the \gls{plc} address space is ours.

\begin{table}[htbp]
    \centering\small
    \caption{External baseline on C mirrors of the deployment-gap cases (signed-overflow checking). Verdicts are \emph{identical} on \gls{cbmc}~6.10.0 and Frama-C/EVA~32.1 (\code{-machdep avr\_16}), and match \gls{esbmc}. The phantom is reproduced cross-tool; only the genuine defect is width-dependent; the bound that clears the phantom must be supplied by hand in every tool}
    \label{tab:baseline}
    \begin{tabular}{llcc l}
        \toprule
        \textbf{Case} & \textbf{Input model} & \textbf{16-bit} & \textbf{32-bit} & \textbf{Diagnosis} \\
        \midrule
        \code{IN1-5} (phantom) & none (naive)            & \code{fail} & \code{fail} & input-driven \\
        \code{IN1-5} (phantom) & manual \code{assume}    & \code{safe} & \code{safe} & bound by hand \\
        \code{adc*100} (alarm) & input bounded           & \code{fail} & \code{safe} & width-dependent \\
        \bottomrule
    \end{tabular}
\end{table}

\subsubsection{Replication on the Real Corpus and the Actual Generated Code} The cases above are faithful but handwritten. To confirm the result on the \emph{real} programs and the \emph{actual deployment C}, we took all 54 programs that \gls{esbmc}'s naive run flagged, reconstructed the \gls{st} function blocks holding the flagged arithmetic from each PLCopen file, compiled them to C with \textbf{MATIEC}~\cite{deSousa2014} -- the same code generator the OpenPLC toolchain uses -- and ran \gls{cbmc} at the AVR width with each integer function-block input made nondeterministic. The outcome is unambiguous: \gls{cbmc} independently reproduces the overflow on \textbf{all 54/54} programs with no input model, and the hand-supplied \gls{hal} bound eliminates it on \textbf{all 54/54} -- a cross-tool replication of the keystone $54{\to}0$ result on the real MATIEC-generated code, not on mirrors. Two honest caveats: under the bound \gls{cbmc}'s bounded check returns overflow-\code{safe} (where \gls{esbmc}'s \textit{k}-induction returns \code{unknown} on the same programs, \S\ref{sec:eval}, RQ4), so the tools agree on the absence of the overflow but differ on proof strength; and the \code{malicious\_*} dataset variants additionally contain a non-terminating logic-bomb loop, an orthogonal non-termination property we do not treat here. This experiment is fully scripted in the artifact.

\subsection{RQ4: Performance and Completeness}
\subsubsection*{Performance} 
Verification is fast (Table~\ref{tab:perf}). On the controlled corpus, the median solve time is \SI{168}{\milli\second} (max \SI{412}{\milli\second}); on the real corpus, runs complete well under a second per program. The \gls{hal} annotator adds only a pair of assume constraints per input and leaves the \gls{smt} encoding otherwise unchanged, so its overhead is negligible -- and, by shrinking each input's domain from the full word to a 10-bit range, it can make solving \emph{easier}.

\begin{table}[htbp]
    \centering\small
    \caption{Per-program solve time (\gls{esbmc}~v8.3.0, Z3 backend, commodity laptop)}
    \label{tab:perf}
    \begin{tabular}{lccc}
        \toprule
        \textbf{Corpus} & \textbf{\#} & \textbf{Median} & \textbf{Max} \\
        \midrule
        Controlled (\gls{hal}-effect taxonomy) & 20  & \SI{168}{\milli\second} & \SI{412}{\milli\second} \\
        Real (third-party)               & 123 & $<\SI{1}{\second}$      & $<\SI{1}{\second}$ \\
    \bottomrule
    \end{tabular}
\end{table}

\subsubsection*{Completeness} 
A genuine limit remains, and we state it plainly. The 54 eliminated false alarms become \code{unknown} rather than \code{safe}: the spurious overflow is gone, but the prover does not \emph{prove} those programs safe. The cause is \textit{k}-induction incompleteness on function-block-heavy programs. Tracing one such case, \gls{esbmc} unwinds an internal function-block loop to the default bound, reports that ``the forward condition is unable to prove the property,'' and that the inductive step does not close because it must havoc-loop-carry internal state. 

This is a property of the prover, \emph{not} of the input model: 37 of these programs are already \code{unknown} in the naive run, and enabling the annotator neither helps nor harms their proof (it only removes the false \code{unsafe}). Interval analysis~\cite{menezes2024} and bounded-proof configurations did not change the outcome. Closing these -- via loop-invariant inference or compositional \textit{k}-induction -- would raise the robustness-proof count above the current~32 and is the most valuable direction for future work (\S\ref{sec:future}). The headline soundness result (54$\to$0 false alarms) is independent of this limit.

\section{Discussion}
\label{sec:discussion}
The strongest contribution of this work is not a bug count but a \emph{soundness} result: open-hardware \gls{plc} verification that accounts for the word width but ignores the hardware input domains is unusable (44\% false alarms, zero true findings), and a small, automatically-derived \gls{hal} model restores usability (54$\to$0 false alarms) while preserving every proof. This mirrors how target-faithful embedded-C analysis is judged~\cite{Cousot2005astree}: value lies in sound proofs and the absence of false alarms, not in headline defect counts. An analyzer that cries wolf is switched off; one that issues only realizable reports is trusted. 

RQ1 lets us tell practitioners precisely when to worry. The deployment gap manifests when bounded-range sensor data is magnified by integer arithmetic beyond the narrow machine word: \gls{adc}-to-engineering-unit scaling (\code{raw * span/range}), running sums and integrators, and squared terms, all deployed on a 16-bit AVR board. Boolean interlocks, timers, counters with small presets, and offset/threshold logic are safe by construction and need no special attention. This is actionable guidance independent of our tool: an engineer porting a recipe or PID-like computation from a 32-bit controller to a Uno-class board should widen intermediates to \code{DINT} or verify with a hardware-faithful model.

The embedded-C analyzers that pioneered target-faithful analysis require the Engineer to supply input ranges as contracts or annotations. In the \gls{plc} setting, that burden would fall on an automation engineer unfamiliar with formal specification, and -- for a corpus of 123 programs -- would dominate the effort. Recovering the bounds from the \gls{iec}~61131-3 direct-addressing scheme removes the burden entirely: the \code{\%IW}/\code{\%IX} address \emph{is} the specification of the input's nature, and the board descriptor supplies the width. The Engineer writes nothing, which is what makes the method deployable at the corpus scale.

Although we instantiate for Arduino, neither the method nor the implementation is Arduino-specific. The \gls{hal} descriptor is a four-field record; retargeting to another open-hardware family (e.g.\ an ESP32-based controller, or a 32-bit industrial soft-\gls{plc}) is a matter of supplying its word width and \gls{adc}/\gls{pwm} resolutions. The same hook, applied with a 32-bit width, reproduces the abstract behavior, so the method strictly generalizes existing abstract verification rather than replacing it.

\section{Threats to Validity}
\label{sec:threats}
The naive baseline and the annotated runs are produced by the \emph{same} binary, differing only by the \code{HAL\_ANNOTATE} toggle, so the measured 54$\to$0 change is attributable to the input model alone and not to incidental differences in tool configuration. Verdicts agree between \code{--16} and \code{--32} on all 54 naive failures, an independent check that those reports are input-driven rather than width-driven.

Modelling an integer \gls{iec} input as a 10-bit \gls{adc} source is the AVR \gls{hal} assumption encoded in $H$. An input physically wired to a wider \gls{adc} (e.g.\ a 14-bit UNO~R4 channel), to a digital word, or to a network variable would carry a different domain; the descriptor is configurable per address, and a mis-specified bound would be unsound in the same way a wrong \code{-machdep} would be for Frama-C. In this study, every integer input is treated uniformly as a 10-bit analog source, the conservative common case for the AVR boards; finer per-address domains are a straightforward extension once richer I/O metadata is available.

91/123 programs are \code{unknown} owing to \textit{k}-induction limits on loop-bearing function blocks (\S\ref{sec:eval}), bounding the robustness-proof count to~32. We emphasize this is a backend limitation orthogonal to the contribution: it is present with and without the input model, and the soundness result (no false alarms, no suppressed real defects) holds regardless. It does, however, mean we cannot claim ``123 programs proven robust,'' only ``54 false alarms removed and 32 robustness proofs retained.''

The deployment gap is rare in the public corpora we could ingest, which are dominated by Boolean/timer logic. Analog process-control corpora (e.g.\ \gls{ics} testbeds in REAL arithmetic) where high-magnitude scaling -- and hence the gap -- may be more common fall outside our current frontend (REAL/non-linear support is partial) and are not represented; our prevalence claim is therefore scoped to integer/Boolean \gls{ld} programs. The two datasets are third-party and pinned, but they are not the universe of \gls{plc} code.

The non-linear abstraction is sound but imprecise: a property that genuinely depends on a \gls{pid} output cannot be proved under it (it will be \code{unknown} or spuriously \code{unsafe} on the abstracted output). Hence, the method is most precise on the integer fragment it targets. Finally, we validate at the model level; we do not execute the verified binaries on physical boards, so the soundness guarantee is relative to the faithfulness of $H$ and the frontend's semantics rather than measured against silicon. Hardware-in-the-loop confirmation is future work.

\section{Future Work}
\label{sec:future}
Several directions follow directly from the limitations above.

The most valuable next step is to convert the 91 \code{unknown} verdicts into \code{safe} robustness proofs, by strengthening \textit{k}-induction over function-block internal state -- loop-invariant inference, or compositional \textit{k}-induction that verifies each function block once and composes the results, avoiding the monolithic unwinding that currently fails to close.

The same framework expresses mechanisms (M1) and (M2): modeling \gls{adc} quantization precisely (representable-code sets rather than intervals) would let the method find threshold-straddling defects, and modeling pin binding from the \code{io\_map} would catch aliasing faults; both are properties over the existing model rather than new machinery.

A fixed-point abstraction of non-linear blocks would be \emph{more} faithful than the current nondet over-approximation -- and is itself how such computations are often realized on an \gls{mcu} -- enabling sound, precise verification of the analog programs presently out of scope, and opening the \gls{ics}-testbed corpora where the gap may be more prevalent.

Per-address \gls{adc} widths and engineering-unit scaling recovered from the project file would replace the uniform 10-bit assumption with exact per-input domains.

Finally, executing the verified programs on physical Uno/Due boards and confronting the synthesized witnesses (e.g.\ $adc=898$) with measured behavior would close the loop between the model-level guarantee and the device, and is the natural validation of the soundness theorem.

\section{Conclusion}
\label{sec:conclusion}
We presented hardware-faithful verification of \gls{iec}~61131-3 on open hardware -- a declarative \gls{hal} descriptor and a sound lowering that models the target word width and hardware-realizable input ranges -- as well as its Arduino instantiation, \arduinotool{}. On 123 real programs, naive width-aware verification is unusable (44\% false alarms, zero true findings); the automatically derived \gls{hal} input model eliminates all false alarms while preserving every proof, and a controlled corpus characterizes the genuine, if rare, width-dependent defects that the method soundly detects.

\section*{Artifact Availability}
The corpus (pinned, with provenance), the annotator patch, and the scripts reproducing every table are archived in an anonymized Zenodo deposit, \url{https://doi.org/10.5281/zenodo.21014209} (concept DOI~10.5281/zenodo.21014209, resolving to the latest version); author identity and the paper DOI are withheld for double-anonymous review and restored at camera-ready. The artifact is self-contained: two scripts (\code{setup.sh}, \code{reproduce.sh}) install the pinned tools -- \gls{cbmc}~6.10.0, Frama-C~32.1, \code{matiec}, and our patched \gls{esbmc} (the \gls{hal} annotator, built from the bundled source) -- and regenerate every claim with a single command. The supported environment is x86\_64 Ubuntu~22.04; \code{safe}/\code{unsafe} verdicts are deterministic across hosts (only wall-clock times vary), and a claim-to-command-to-expected-output map accompanies the scripts. An independent continuous-integration pipeline that rebuilds the entire toolchain from source on a clean Ubuntu~22.04 image reproduces every verdict-based result in this paper identically; only the performance figures (Table~\ref{tab:perf}) are host-dependent.

\appendix
\section*{Appendix}
\section{Soundness: Supporting Lemma and Proofs}
\label{app:soundness}
This appendix supplies the lemma and full proof underlying the Soundness theorem (Theorem~\ref{thm:sound}, \S\ref{sub:soundness}). We first isolate the only step of the lowering that is an \emph{abstraction} rather than an exact match.

\begin{lemma}[Non-linear over-approximation]
\label{lem:nonlin}
    Replacing the body of a non-linear function block by a fresh nondeterministic value constrained to the block's declared output range yields a transition relation whose reachable states form a \emph{superset} of those of the original block.
\end{lemma}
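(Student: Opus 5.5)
The proof is a simulation (relation-inclusion) argument on the scan-cycle transition system $\mathcal{T}=(S,s_0,R)$ of \S\ref{sub:exec-model}. Let $f$ be a non-linear function-block instance with internal state $\mathcal{V}_f$, inputs $\mathrm{in}_f$ and outputs $\mathrm{out}_f$. The original block is a relation $B_f\subseteq (\mathcal{V}_f\times\mathrm{in}_f)\times(\mathcal{V}_f\times\mathrm{out}_f)$; it is a function if the block is deterministic, but we need not assume this. The abstraction replaces it by
\[
\widehat{B}_f=\{((v,x),(v',y)) \mid y\in\mathrm{Rng}(f)\},
\]
where $\mathrm{Rng}(f)$ is the declared output range and $v'$ is unconstrained. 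Equivalently, the internal state of $f$ is dropped, since nothing outside $f$ reads it. The key hypothesis is that the declared range is \emph{faithful}: every output the original body can produce lies in $\mathrm{Rng}(f)$. This mirrors the faithfulness premise on $H$ used in Theorem~\ref{thm:sound}. I would state it explicitly as the premise of the lemma, since without it the claim fails.

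First, I would prove the local inclusion on observable outputs. For every $((v,x),(v',y))\in B_f$, faithfulness gives $y\in\mathrm{Rng}(f)$, so $((v,x),(\tilde v,y))\in\widehat{B}_f$ for some (indeed every) $\tilde v$. Projected onto the variables visible to the rest of the program, $B_f\subseteq\widehat{B}_f$.

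Second, I would lift this to the whole scan relation. A scan is a sequential composition: sample the inputs, run the rungs, and invoke the blocks. The only change is the substitution of $\widehat{B}_f$ for $B_f$. Relational composition is monotone, so $R\subseteq\widehat{R}$ modulo the projection that forgets $\mathcal{V}_f$. Concretely, define $\pi:S\to\widehat{S}$ to erase $\mathcal{V}_f$. Then $\pi(s_0)=\widehat{s}_0$, and $(s,t)\in R$ implies $(\pi(s),\pi(t))\in\widehat{R}$.

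Third, an induction on the number of scans shows that $s$ reachable in $\mathcal{T}$ implies $\pi(s)$ reachable in $\widehat{\mathcal{T}}$. Hence $\pi(\mathrm{Reach})\subseteq\widehat{\mathrm{Reach}}$. This is the superset claim, read on the shared variables over which $\varphi$ is stated.

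I expect the main obstacle to be the erased internal state $\mathcal{V}_f$, for example the integrator of a \gls{pid}. Reachable states are literally not comparable as sets over the same variables, so the lemma must be read up to $\pi$. My plan is to make $\pi$ explicit and argue that $\varphi$ cannot mention $\mathcal{V}_f$ after abstraction, so that satisfaction transfers.

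A second subtlety is that the output range may in reality depend on the inputs. This does not break soundness: $\widehat{B}_f$ is still a superset, merely a coarser one. The same reasoning also covers multiple block instances, by applying the substitution one instance at a time and chaining the inclusions.
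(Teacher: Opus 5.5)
Your proposal is correct and takes essentially the paper's route. The paper also argues by relation inclusion: every transition of the original block is matched by some nondeterministic choice of the abstracted block. It discharges the in-range hypothesis by the block's type-correctness, whereas you state it as an explicit faithfulness premise; the two agree when the declared range is the output type's range.

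Your monotonicity lift to the scan relation, the induction to reachable states, and the projection $\pi$ are extra rigor the paper leaves implicit. Note that under your own definition of $\widehat{B}_f$, where the internal state $v'$ is havocked rather than deleted, $B_f\subseteq\widehat{B}_f$ already holds literally, matching the paper's ``same successor state''. So $\pi$ is only needed if you choose to erase $\mathcal{V}_f$.
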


\begin{proof}
    The block is type-correct, so every value its body can compute lies within the declared output range. Hence, for each transition of the original block, there is a nondeterministic choice in the abstracted block that reaches the same successor state; the abstraction additionally admits choices the body never makes. The abstracted relation, therefore, contains the original one and adds behaviors.
\end{proof}

\begin{proof}[Proof of Theorem~\ref{thm:sound}]
    The lowering changes the device's transition relation only by (a)~fixing arithmetic at the true width $w$ -- an exact match, not an abstraction -- and (b)~replacing each non-linear block body by a nondeterministic value ranging over a superset of its actual outputs. Step~(b) only adds behaviors by Lemma~\ref{lem:nonlin}. The input-range \code{assume}s \emph{remove} behaviors, but only those with inputs outside $D(a)$, which the device never exhibits; hence, no device behavior is excluded. Therefore $\mathrm{Reach}_{\mathrm{dev}}\subseteq\mathrm{Reach}_{M_{dev}}$, giving the $\varphi$-preservation. Realisability is immediate: a counterexample satisfies the injected $\mathtt{assume}(i\in D(a))$ for every input, so its inputs are values the hardware can produce.
\end{proof}

\section*{Acknowledgements}
The authors would like to express their gratitude to the Department of Computer Science at the University of Manchester and the Systems and Software Security (S3) Research Group for their invaluable support, collaborative environment, and access to cutting-edge resources, which were instrumental in the success of this research. This work was conducted with partial funding from the Engineering and Physical Sciences Research Council (EPSRC) grants EP/T026995/1, EP/V000497/1, EP/X037290/1, and the Soteria project, awarded by UK Research and Innovation under the Digital Security by Design (DSbD) Programme.

\bibliographystyle{unsrtnat}
\bibliography{0.sample-base}

@inproceedings{Weiss2021,
title = {Towards Establishing Formal Verification and Inductive Code Synthesis in the {PLC} Domain},
DOI = {10.1109/INDIN45523.2021.9557423},
booktitle = {2021 IEEE 19th International Conference on Industrial Informatics (INDIN)},
publisher = {IEEE},
author = {Weis, Matthias and Marks, Philipp and Maschler, Benjamin and White, Dustin and Kesseli, Pascal and Weyrich, Michael},
year = {2021},
month = jul,
address = {Palma de Mallorca, Spain},
pages = {1--8},
numpages = {8},
}

@inproceedings{LopezMiguel2022,
author = {Tournier, Jean-Charles and Fern\'{a}ndez Adiego, Borja and Lopez-Miguel, Ignacio D.},
title = {{PLCverif}: Status of a Formal Verification Tool for Programmable Logic Controller},
booktitle = {Proceedings of the 18th International Conference on Accelerator and Large Experimental Physics Control Systems ({ICALEPCS}'21)},
pages = {MOPV042},
publisher = {JACoW Publishing},
address = {Shanghai, China},
month = oct,
year = {2021},
doi = {10.18429/JACoW-ICALEPCS2021-MOPV042},
isbn = {978-3-95450-221-9},
}

@inproceedings{Ukegbu2023a,
series = {CPS-IoT Week '23},
title = {Cooperative Verification of {PLC} Programs Using {CoVeriTeam}: Towards a Reliable and Secure Industrial Control Systems},
DOI = {10.1145/3576914.3587490},
booktitle = {Proceedings of Cyber-Physical Systems and Internet of Things Week 2023},
publisher = {ACM},
author = {Ukegbu, Chibuzo and Mehrpouyan, Hoda},
year = {2023},
month = may,
address = {San Antonio, TX, USA},
pages = {37--42},
numpages = {6},
collection = {CPS-IoT Week '23},
}

@techreport{Ebnenasir2023,
author = {Ebnenasir, Ali},
title = {Formalizing Ladder Logic Programs and Timing Charts for Fault Impact Analysis and Verification of Fault Tolerance},
institution = {Michigan Technological University, Department of Computer Science},
year = {2023},
number = {CS-TR-23-01},
url = {https://www.mtu.edu/cs/research/papers/pdfs/formalizing-ladder-logic-ali-ebnenasir-tech-rpt-010623-rev.pdf}
}

@article{Wang2023,
title = {K-ST: A Formal Executable Semantics of the Structured Text Language for PLCs},
volume = {49},
ISSN = {2326-3881},
DOI = {10.1109/tse.2023.3315292},
number = {10},
journal = {IEEE Transactions on Software Engineering},
publisher = {Institute of Electrical and Electronics Engineers (IEEE)},
author = {Wang, Kun and Wang, Jingyi and Poskitt, Christopher M. and Chen, Xiangxiang and Sun, Jun and Cheng, Peng},
year = {2023},
month = Oct,
pages = {4796–4813}
}

@inbook{Lee2024,
title = {Formal Semantics and Analysis of Multitask {PLC} {ST} Programs with Preemption},
ISBN = {9783031711626},
ISSN = {1611-3349},
DOI = {10.1007/978-3-031-71162-6_22},
booktitle = {Formal Methods -- 26th International Symposium, {FM} 2024},
publisher = {Springer Nature Switzerland},
author = {Lee, Jaeseo and Bae, Kyungmin},
year = {2024},
month = sep,
address = {Milan, Italy},
pages = {425--442},
}

@inproceedings{Iacobelli2024,
title = {Detection of Ladder Logic Bombs in {PLC} Control Programs: An Architecture Based on Formal Verification},
DOI = {10.1109/ICPS59941.2024.10639995},
booktitle = {2024 IEEE 7th International Conference on Industrial Cyber-Physical Systems (ICPS)},
publisher = {IEEE},
author = {Iacobelli, Antonio and Rinieri, Lorenzo and Melis, Andrea and Sadi, Amir Al and Prandini, Marco and Callegati, Franco},
year = {2024},
month = may,
address = {St.\ Louis, MO, USA},
pages = {1--7},
}

@inbook{Fink2024,
title = {Verifying PLC Programs via Monitors: Extending the Integration of FRET and PLCverif},
ISBN = {9783031606984},
ISSN = {1611-3349},
DOI = {10.1007/978-3-031-60698-4_26},
booktitle = {NASA Formal Methods},
publisher = {Springer Nature Switzerland},
address = {Moffett Field, CA, USA},
author = {Fink, Xaver and Mavridou, Anastasia and Katis, Andreas and Adiego, Borja Fernández},
year = {2024},
pages = {427–435}
}

@misc{Bruttomesso2024,
author = {Bruttomesso, Roberto and {Di Pinto}, Alessandro and Carullo, Moreno and Carcano, Andrea},
title = {Method for automatic translation of ladder logic to a {SMT}-based model checker in a network},
howpublished = {US Patent 11,906,943. Assignee: Nozomi Networks {SAGL}},
year = {2024},
note = {Filed: 2021-08-12. Granted: 2024-02-20.},
}

@inbook{LopezMiguel2025,
title = {Formal Verification of {PLCs} as a Service: A {CERN-GSI} Safety-Critical Case Study},
ISBN = {9783031937064},
ISSN = {1611-3349},
DOI = {10.1007/978-3-031-93706-4_13},
booktitle = {NASA Formal Methods -- 17th International Symposium, {NFM} 2025},
publisher = {Springer Nature Switzerland},
author = {Lopez-Miguel, Ignacio D. and Adiego, Borja Fern\'{a}ndez and Salinas, Matias and Betz, Christine},
year = {2025},
month = jun,
address = {Williamsburg, VA, USA},
pages = {227--235},
}

@inbook{Lee2025,
title = {Formal Analysis of Networked {PLC} Controllers Interacting with Physical Environments},
ISBN = {9783032071064},
ISSN = {1611-3349},
DOI = {10.1007/978-3-032-07106-4_14},
booktitle = {Static Analysis -- 32nd International Symposium, {SAS} 2025},
publisher = {Springer Nature Switzerland},
author = {Lee, Jaeseo and Bae, Kyungmin},
year = {2025},
month = oct,
address = {Singapore, Singapore},
pages = {328--356},
}

@article{Cordeiro2012esbmc,
title = {SMT-Based Bounded Model Checking for Embedded ANSI-C Software},
volume = {38},
issn = {0098-5589},
doi = {10.1109/tse.2011.59},
number = {4},
journal = {IEEE Transactions on Software Engineering},
publisher = {Institute of Electrical and Electronics Engineers (IEEE)},
author = {Cordeiro, Lucas and Fischer, Bernd and Marques-Silva, Joao},
year = {2012},
month = Jul,
pages = {957–974}
}

@inproceedings{Sheeran2000kinduction,
title = {Checking Safety Properties Using Induction and a SAT-Solver},
isbn = {9783540409229},
issn = {0302-9743},
doi = {10.1007/3-540-40922-x_8},
booktitle = {Formal Methods in Computer-Aided Design},
publisher = {Springer Berlin Heidelberg},
author = {Sheeran, Mary and Singh, Satnam and Stålmarck, Gunnar},
year = {2000},
pages = {127–144},
address = {Berlin}
}

@inbook{Clarke2004cbmc,
title = {A Tool for Checking ANSI-C Programs},
isbn = {9783540247302},
issn = {1611-3349},
doi = {10.1007/978-3-540-24730-2_15},
booktitle = {Tools and Algorithms for the Construction and Analysis of Systems},
publisher = {Springer Berlin Heidelberg},
author = {Clarke, Edmund and Kroening, Daniel and Lerda, Flavio},
year = {2004},
pages = {168–176},
address = {Berlin}
}

@article{gadelha2020,
title = {{ESBMC 6.1: Automated Test Case Generation Using Bounded Model Checking}},
volume = {23},
doi = {10.1007/s10009-020-00571-2},
number = {6},
journal = {International Journal on Software Tools for Technology Transfer},
publisher = {Springer Science and Business Media LLC},
author = {Gadelha, Mikhail R. and Menezes, Rafael S. and Cordeiro, Lucas C.},
year = {2020},
month = may,
pages = {857--861}
}

@inproceedings{menezes2024,
title = {{ESBMC v7.4: Harnessing the Power of Intervals: (Competition Contribution)}},
doi = {10.1007/978-3-031-57256-2_24},
booktitle = {Tools and Algorithms for the Construction and Analysis of Systems (TACAS 2024)},
series = {Lecture Notes in Computer Science},
volume = {14572},
publisher = {Springer Nature Switzerland},
author = {Menezes, Rafael S{\'a} and Aldughaim, Mohannad and Farias, Bruno and Li, Xianzhiyu and Manino, Edoardo and Shmarov, Fedor and Song, Kunjian and Brau{\ss}e, Franz and Gadelha, Mikhail R. and Tihanyi, Norbert and Korovin, Konstantin and Cordeiro, Lucas C.},
year = {2024},
pages = {376--380},
address = {Luxembourg City, Luxembourg}
}

@inproceedings{Cavada2014,
author = {Cavada, Roberto and Cimatti, Alessandro and Dorigatti, Marco and Griggio, Alberto and Mariotti, Alessandro and Micheli, Andrea and Mover, Sergio and Roveri, Marco and Tonetta, Stefano},
title = {The {nuXmv} Symbolic Model Checker},
booktitle = {Computer Aided Verification (CAV 2014)},
series = {Lecture Notes in Computer Science},
volume = {8559},
pages = {334--345},
year = {2014},
publisher = {Springer},
doi = {10.1007/978-3-319-08867-9_22},
address = {Vienna, Austria},
}

@misc{ESBMCpr5400,
author = {Dantas, Pierre and Cordeiro, Lucas C. and {Silva J{\'u}nior}, W.~S.},
title = {{ESBMC-PLC+}: Unified {IEC}~61131-3 Formal Verification Framework with {ST} Frontend and Graphical Function Block Support (Pull Request \#5427)},
year = {2026},
howpublished = {GitHub Pull Request \#5427, esbmc/esbmc},
url = {https://github.com/esbmc/esbmc/pull/5427},
note = {Source code and benchmark suite}
}

@misc{deSousa2014,
title = {An Open Source IEC 61131-3 Integrated Development Environment},
ISSN = {1935-4576},
DOI = {10.1109/indin.2007.4384753},
booktitle = {2007 5th IEEE International Conference on Industrial Informatics},
publisher = {IEEE},
author = {Tisserant, Edouard and Bessard, Laurent and de Sousa, Mario},
year = {2007},
month = Jul,
pages = {183–187}
}

@misc{DantasCordeiro2026graphical,
author = {Dantas, Pierre and Cordeiro, Lucas C. and {Silva J{\'u}nior}, W.~S.},
title = {{ESBMC-GraphPLC}: Formal Verification of Graphical {PLCopen XML} Ladder Diagram Programs Using {SMT}-Based Model Checking},
year = {2026},
publisher = {Zenodo},
doi = {10.5281/zenodo.20699856},
}

@misc{DantasCordeiro2026artefact,
author = {Dantas, Pierre and Cordeiro, Lucas C. and {Silva J{\'u}nior}, W.~S.},
title = {{ESBMC-PLC}: Formal Verification of {IEC}~61131-3 Ladder Diagram Programs},
year = {2026},
version = {1.0.0},
publisher = {Zenodo},
doi = {10.5281/zenodo.20680071},
note = {TACAS 2027 artefact. Reproducible viabash conformance/run\_all.sh}
}

@article{Alves2018openplc,
title = {OpenPLC: An IEC 61131-3 compliant open source industrial controller for cyber security research},
volume = {78},
issn = {0167-4048},
doi = {10.1016/j.cose.2018.07.007},
journal = {Computers \& Security},
publisher = {Elsevier BV},
author = {Alves, Thiago and Morris, Thomas},
year = {2018},
month = Sep,
pages = {364–379}
}

@inproceedings{Alves2014openplc,
title = {OpenPLC: An open source alternative to automation},
doi = {10.1109/ghtc.2014.6970342},
booktitle = {IEEE Global Humanitarian Technology Conference (GHTC 2014)},
publisher = {IEEE},
author = {Rodrigues Alves, Thiago and Buratto, Mario and de Souza, Flavio Mauricio and Rodrigues, Thelma Virginia},
year = {2014},
month = Oct,
pages = {585–589},
address = {San Jose, CA, USA}
}

@misc{ArduinoOpta,
author = {{Arduino}},
title = {Arduino {OPTA}: Micro {PLC} with Industrial {IoT}},
howpublished = {\url{https://www.arduino.cc/pro/hardware-arduino-opta/}},
year = {2024},
note = {Accessed 2026-06-27}
}

@misc{ArduinoPLCIDE,
author = {{Arduino}},
title = {Arduino {PLC} {IDE}: {IEC} 61131-3 programming for Arduino Pro hardware},
howpublished = {\url{https://www.arduino.cc/pro/software-plc-ide/}},
year = {2024},
note = {Accessed 2026-06-27}
}

@misc{Controllino,
author = {{CONTROLLINO}},
title = {{CONTROLLINO}: Arduino-compatible industrial {PLC}},
howpublished = {\url{https://www.controllino.com/}},
year = {2024},
note = {Accessed 2026-06-27}
}

@misc{IndustrialShields,
author = {{Industrial Shields}},
title = {{M-Duino} Arduino-based {PLC} family},
howpublished = {\url{https://www.industrialshields.com/}},
year = {2024},
note = {Accessed 2026-06-27}
}

@article{Leroy2009compcert,
title = {Formal verification of a realistic compiler},
volume = {52},
ISSN = {1557-7317},
DOI = {10.1145/1538788.1538814},
number = {7},
journal = {Communications of the ACM},
publisher = {Association for Computing Machinery (ACM)},
author = {Leroy, Xavier},
year = {2009},
month = Jul,
pages = {107–115}
}

@article{Kirchner2015framac,
title = {Frama-C: A software analysis perspective},
volume = {27},
ISSN = {1433-299X},
DOI = {10.1007/s00165-014-0326-7},
number = {3},
journal = {Formal Aspects of Computing},
publisher = {Association for Computing Machinery (ACM)},
author = {Kirchner, Florent and Kosmatov, Nikolai and Prevosto, Virgile and Signoles, Julien and Yakobowski, Boris},
year = {2015},
month = May,
pages = {573–609}
}

@inproceedings{Cousot2005astree,
title = {The ASTREÉ Analyzer},
isbn = {9783540319870},
issn = {1611-3349},
doi = {10.1007/978-3-540-31987-0_3},
booktitle = {Programming Languages and Systems},
publisher = {Springer Berlin Heidelberg},
author = {Cousot, Patrick and Cousot, Radhia and Feret, J{\^e}rome and Mauborgne, Laurent and Min{\'e}, Antoine and Monniaux, David and Rival, Xavier},
year = {2005},
pages = {21–30},
address = {Edinburgh, UK}
}

@misc{ArduinoCoreAVR,
author = {{Arduino}},
title = {{ArduinoCore-avr}: the official Arduino AVR core},
howpublished = {\url{https://github.com/arduino/ArduinoCore-avr}},
year = {2024},
note = {ADC 10-bit, analogWrite/PWM 8-bit; AVR-GCC \code{int} = 16-bit. Accessed 2026-06-28}
}

@misc{Rosiak2022clonedataset,
author = {Rosiak, Kamil and others},
title = {{IEC 61131-3} Clone Detection dataset (PLCopen XML / Structured Text)},
howpublished = {\url{https://github.com/KamilRosiak/IEC_61131_3_Clone_Detection}},
year = {2022},
note = {Accessed 2026-06-28}
}

\end{document}